\algnewcommand\algorithmicforeach{\textbf{for each}}
\algnewcommand\algorithmicswitch{\textbf{switch}}
\algnewcommand\algorithmiccase{\textbf{case}}
\algnewcommand\algorithmicassert{\texttt{assert}}
\algnewcommand\Assert[1]{\State \algorithmicassert(#1)}%
\newtheorem{proposition}{\textbf{Proposition}}
\newtheorem{proof}{\textbf{Proof}}
\newcommand{\rs}{RadixSpline}
\newcommand{\rmi}{RMI}
\newcommand{\grmi}{GRMI}
\newcommand{\bgrmi}{BGRMI}
\newcommand{\alex}{ALEX}
\newcommand{\amac}{AMAC}
\newcommand{\swwc}{SWWC}
\newcommand{\cdf}{CDF}
\newcommand{\numa}{NUMA}
\newcommand{\simd}{SIMD}
\newcommand{\sosd}{SOSD}
\newcommand{\fb}{\textit{face}}
\newcommand{\osm}{\textit{osm}}
\newcommand{\wiki}{\textit{wiki}}
\newcommand{\seqh}{\textit{seq\_h}}
\newcommand{\unif}{\textit{unif}}
\newcommand{\lognorm}{\textit{lognorm}}
\newcommand{\inlj}{INLJ} 
\newcommand{\linlj}{G{\rmi}-INLJ} 
\newcommand{\blinlj}{BG{\rmi}-INLJ} 
\newcommand{\rmiinlj}{{\rmi}-INLJ} 
\newcommand{\csstreeinlj}{CSS-INLJ} 
\newcommand{\bcsstreeinlj}{BCSS-INLJ} 
\newcommand{\arttreeinlj}{ART-INLJ} 
\newcommand{\hashinlj}{HASH-INLJ} 
\newcommand{\cuckooinlj}{CUCKOO-INLJ} 
\newcommand{\rshashinlj}{RSHASH-INLJ} 
\newcommand{\alexinlj}{ALEX-INLJ} 
\newcommand{\sj}{SJ} 
\newcommand{\lsj}{L-SJ} 
\newcommand{\blsj}{BL-SJ} 
\newcommand{\mpsm}{MPSM} 
\newcommand{\mway}{MWAY} 
\newcommand{\mpsmsj}{MPSM-SJ} 
\newcommand{\mpsmlssj}{MPSM-LS-SJ} 
\newcommand{\mwaysj}{MWAY-SJ} 
\newcommand{\mwaylssj}{MWAY-LS-SJ} 
\newcommand{\hj}{HJ} 
\newcommand{\lpj}{LP-HJ} 
\newcommand{\lnpj}{LNP-HJ} 
\newcommand{\pj}{PJ} 
\newcommand{\npj}{NPJ} 
\newcommand{\pjhj}{P-HJ} 
\newcommand{\npjhj}{NP-HJ} 
\def\BibTeX{{\rm B\kern-.05em{\sc i\kern-.025em b}\kern-.08em
    T\kern-.1667em\lower.7ex\hbox{E}\kern-.125emX}}
\begin{document}

\title{The Case for Learned In-Memory Joins}

\author{\IEEEauthorblockN{Ibrahim Sabek}
\IEEEauthorblockA{\textit{MIT CSAIL} \\
sabek@mit.edu}
\and
\IEEEauthorblockN{Tim Kraska}
\IEEEauthorblockA{\textit{MIT CSAIL} \\
kraska@mit.edu}
}

\maketitle

\begin{abstract}
\label{sec:abstract}
In-memory join is an essential operator in any database engine. It has been extensively investigated in the database literature. In this paper, we study whether exploiting the {\cdf}-based learned models to boost the join performance is practical or not. To the best of our knowledge, we are the first to fill this gap. We investigate the usage of {\cdf}-based partitioning and learned indexes (e.g., Recursive Model Index ({\rmi}) and {\rs}) in the three join categories; indexed nested loop join ({\inlj}), sort-based joins ({\sj}) and hash-based joins ({\hj}). Our study shows that there is a room to improve the performance of {\inlj} and {\sj} categories through our proposed optimized learned variants. Our experimental analysis showed that these proposed learned variants of {\inlj} and {\sj} consistently outperform the state-of-the-art techniques. 

\end{abstract}

\begin{IEEEkeywords}
Databases, In-memory Join Algorithms, Machine Learning
\end{IEEEkeywords}

\section{Introduction}
\label{sec:introduction}

In-memory join is an essential operator in any database engine. It has been extensively investigated in the database literature (e.g.,~\cite{HYF+08, HL06, AKN12, BAT+13, TAB+13, BLP11, LLA+15, BGN21}). Basically, in-memory join algorithms can be categorized into three main categories:~nested loop joins (e.g.,~\cite{HYF+08, HL06}), whether non-indexed or indexed, sort-based joins (e.g.,~\cite{AKN12, BAT+13}), and hash-based joins (e.g.,~\cite{TAB+13, BLP11, LLA+15, BGN21}). Many recent studies (e.g.,~\cite{BAT+13, SCD16}) have shown that the design, as well as the implementation details, have a substantial impact on the join performance, specially on the modern hardware.

Meanwhile, there has been growing interest in exploiting learned models, such as {\cdf}-based partitioning~\cite{KBC+18, KVC+20} and Recursive Model Indexes ({\rmi})~\cite{KBC+18}, to enhance or replace traditional data structures and algorithms, such as indexing~\cite{KBC+18, KMR+20, FV20, DMY+20, PRK+20, NDA+20}, sorting~\cite{KVC+20, KVK21}, and hashing~\cite{SVH+21, Kra21}. These learned data structures and algorithms can outperform their traditional counterparts on practical workloads as they are better in exploiting trends in the input data and instance-optimizing the performance. Along this line of work, the idea of using {\cdf}-based learned models to improve the performance of in-memory join algorithms was suggested in~\cite{KBC+18}. However, this idea was neither fully investigated nor supported with an experimental evidence.

In this paper, we aim to study, in details, whether exploiting the {\cdf}-based learned models to boost the performance of in-memory joins is a beneficial idea or not. In particular, we investigate the performance of the three join categories; indexed nested loop join ({\inlj}), sort-based joins ({\sj}) and hash-based joins ({\hj}), while modifying or replacing their different phases (e.g., indexing, sorting, joining) with {\cdf}-based and {\rmi}-based variants. Our paper has the following three main contributions:

\noindent\textbf{Investigating Alternatives of Using Learned Models for Joins.} Although the learned model ideas have been introduced in previous works, none of them has been studied in the context of join processing. Therefore, for each join category, we first discuss the straightforward alternatives of directly integrating learned models with its phases. For example, in {\inlj}, {\rmi} can be used to replace the built index on the indexed relation. In {\sj}, the LearnedSort algorithm~\cite{KVC+20} can be used to replace the sorting phase. In {\hj}, {\rmi} or {\cdf}-based partitioning function can be used to replace the hash function that is used to build the hash table. While investigating the different alternatives, we come up with two interesting observations. First, it is hard to use learned models to improve over classical hash join methods~\cite{TAB+13, LLA+15, BGN21}. Second, using learned models "as-is" in replacing {\inlj} and {\sj} phases is sub-optimal. For example, using the LearnedSort algorithm~\cite{KVC+20} as a black-box replacement for the sorting algorithms used in the state-of-the-art {\sj} techniques, such as {\mpsm}~\cite{AKN12} and {\mway}~\cite{TAB+13}, leads to repeating unnecessary work, and hence increases the overall execution time. Another example is replacing the hash functions used in hash-based {\inlj} with {\rmi} models. Unfortunately, this also leads to a poor performance (i.e., high overall latency) as {\rmi} requires a significant overhead to correct its mispredictions.

\noindent\textbf{Optimized Learned Join Variants.} To overcome the performance limitations of using learned models as black-boxes, we introduce optimized variants of the learned join algorithms in the {\inlj} and {\sj} join categories, which lead to several performance improvements. In particular, in {\inlj} category, we introduced two efficient variants of learned {\inlj}. The first variant, namely Gapped {\rmi} {\inlj}, employs an efficient read-only {\rmi} that is built using the concept of \textit{model-based insertion} introduced in~\cite{DMY+20}. To further improve the performance of this variant, we introduced a \textit{cache-optimized hierarchical buffering} technique, namely Requests Buffer, to temporarily store the probe (i.e., lookup) operations over the {\rmi}, which reduces the cache misses, and hence improves the join throughput. We also provided a cache complexity analysis for this buffering technique, and proved its matching to the caching analytical bounds in~\cite{HL06}. The second variant, namely learned-based hash {\inlj}, employs {\rs} to build a hash index that is useful in some scenarios, like joining sequential datasets. In {\sj} category, we proposed a new sort-based join, namely {\lsj}, that exploits a {\cdf}-based model to improve the different {\sj} phases (e.g., sort and join) by just scaling up its prediction degree. {\lsj} has the following two merits. First, it combines {\cdf}-based partitioning with AVX sort~\cite{BAT+13} to provide a new efficient sorting phase. We also provided a complexity analysis for this new sorting phase. Second, it employs the {\cdf}-based model to reduce the amount of redundant join checks in the merge-join phase.

\noindent\textbf{Extensive Experimental Evaluation.} To achieve a deeper understanding of the practicality of using learned models with joins, we conducted a detailed evaluation study for both learned and non-learned variants in the three join categories, using different real ({\sosd}~\cite{MKR+20}) and synthetic datasets. In particular, we compare the optimized learned join variants against, seven {\inlj} baselines (three off-shelf learned ({\rmi}~\cite{KBC+18}, {\rs}~\cite{KMR+20}, {\alex}~\cite{DMY+20}), two tree-based (Cache-Sensitive Search Trees~\cite{HYF+08}, Adaptive Radix Trie~\cite{LKN13}), and two hash-based (bucket chaining and Cuckoo hash) indexes), two {\sj} baselines ({\mpsm}~\cite{AKN12}, {\mway}~\cite{BAT+13}), and two {\hj} baselines (non-partitioned~\cite{TAB+13}, partitioned~\cite{SCD16}). We show that, the two optimized variants of learned {\inlj}, using Gapped {\rmi}, can provide better performance than almost all traditional baselines (e.g., 2.7X faster than hash-based {\inlj}) in many scenarios and with different datasets. In addition, the optimized learned {\sj} variant can be 2X faster than {\mway}~\cite{BAT+13}, which is the best sort-based join baseline in the literature. We also extend this analysis to duplicate datasets, multiple dataset sizes and ratios, skewness and parallelism degrees, and other performance counters (e.g., cache misses, branch misses and instructions) as well as hardware aspects (e.g., NUMA awareness and page sizes).

We believe our study is beneficial for researchers and practitioners, who are interested in building instance-optimized DBMS, like SageDB~\cite{KAB+19}, as it provides good practices for implementing efficient learned join algorithms. 



\section{Background}
\label{sec:background}


\begin{figure*}
    \centering
    \begin{minipage}{0.8\textwidth}
        \centering
        \includegraphics[width=0.9\textwidth, height=0.2\textheight]{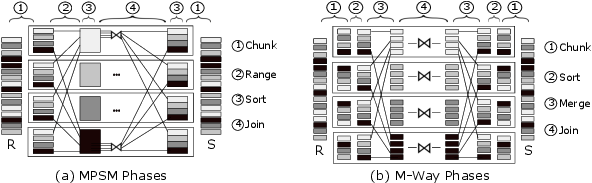}
        \caption{Sort-based Joins: {\mpsm} and {\mway}}
        \label{fig:background_mpsm_mway}
    \end{minipage}%
    \begin{minipage}{0.2\textwidth}
        \centering
        \includegraphics[height=0.18\textheight]{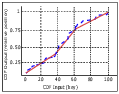}
        \caption{Linear Models Approximation of {\cdf}}
        \label{fig:background_cdf}
    \end{minipage}
\end{figure*}



In this section, we provide a brief background on the different join categories that we study (section~\ref{sec:background_inlj} to~\ref{sec:background_hj}), and the learned concepts and indexes we exploit (Section~\ref{sec:background_rmi_cdf_partitioning}).


\subsection{Indexed Nested Loop Joins ({\inlj})}
\label{sec:background_inlj}
This is the most basic type of join. Here, we assume that one input relation, say $R$, has an index on the join key. Then, the algorithm simply scans each tuple in the second relation $S$, uses the index to fetch the tuples from $R$ with the same join key, and finally performs the join checks. {\inlj} achieves a good performance when using a clustered index as the tuples with the same join key will have a spatial locality, and hence improves the caching behavior.

\subsection{Sort-based Joins ({\sj})}
\label{sec:background_sj}

The main idea of sort-based join is to first sort both input relations on the same join key. Then, the sorted relations are joined using an efficient merge-join algorithm to find all matching tuples. Here, we describe the details of two efficient multi-core variants of sort-based joins, {\mpsm}~\cite{AKN12} and {\mway}~\cite{BAT+13}, as follows: 

\noindent\textbf{Massively Parallel Sort-Merge ({\mpsm}).} The idea of {\mpsm} is simple. It generates sorted runs for the different partitions of the input relations in parallel. Then, these sorted runs are joined directly, without the need to merge each relation entirely first. The steps of a range-partitioned variant of {\mpsm} from~\cite{AKN12}, which is very efficient for {\numa}-aware systems, are as follows: (1)~chunk each input relation into equi-sized chunks among the workers (\textit{Chunk} step), (2)~globally range-partition the smaller relation, say $R$, such that different ranges of $R$ are assigned to different workers (\textit{Range} step), (3)~locally sort each partition from $R$ and $S$ on its own worker (\textit{Sort} step) (note that after this step $R$ is globally sorted, but $S$ is partially sorted), and (4)~merge-join each sorted run from $R$ with all the sorted runs from $S$ (\textit{Join} step).


\noindent\textbf{Multi-Way Sort-Merge ({\mway}).} In {\mway}, the algorithm sorts both relations, merges the sorted runs from each relation, and finally applies a one-to-one merge-join step. The steps of the {\mway} algorithm from~\cite{BAT+13} are as follows: (1)~chunk each input relation into equally-sized chunks among the workers, where each chunk is further partitioned into few segments using efficient radix partitioning (\textit{Chunk} step), (2)~locally sort all segments of each partition from $R$ and $S$ on its own worker (\textit{Sort} step), (3)~multi-way merge all segments of each relation such that different ranges of the merged relation are assigned to different workers (\textit{Merge} step), and (4)~locally merge-join each partition from merged $R$ with exactly one corresponding partition from merged $S$ that is located on the same worker (\textit{Join} step). Both sorting and merging steps are implemented with {\simd} bitonic sorting and merge networks, respectively.


\subsection{Hash-based Joins ({\hj})}
\label{sec:background_hj}

\noindent\textbf{Non-Partitioned Hash Join ({\npj}).} It is a direct parallel variation of the canonical hash join~\cite{LLA+15, TAB+13}. Basically, the algorithm chunks each input relation into equi-sized chunks and assigns them to a number of worker threads. Then, it runs in two phases: \textit{build} and \textit{probe}. In the \textit{build} phase, all workers use the chunks of one input relation, say $R$, to concurrently build a single global hash table. In the \textit{probe} phase, after all workers are done building the hash table, each worker starts probing its chunk from the second relation $S$ against the build hash table. Typically, the build and probe phases benefit from simultaneous multi-threading and out-of-order execution to hide cache miss penalties automatically. 

\noindent\textbf{Partitioned Hash Join ({\pj}).} The main idea is to improve over the non-partitioned hash join by chunking the input relations into small pairs of chunks such that each chunk of one input relation, say $R$, fits into the cache. This will significantly reduce the number of cache misses when building and probing hash tables. The state-of-the-art algorithm proposed in~\cite{TAB+13} has two main phases: \textit{partition} and \textit{join}. In the \textit{partition} phase, a multi-pass radix partitioning (usually 2 or 3 passes are enough) is applied on the two input relations. The key idea is to keep the number of partitions in each pass smaller than the number of TLB entries to avoid any TLB misses. The algorithm builds a histogram to determine the output memory ranges of each target partition. Typically, one global histogram is built. However, we adapt an efficient variation of the partitioning phase from ~\cite{SCD16} that builds multiple local histograms and uses them to partition the input chunks locally to avoid the excessive amount of non-local writes. Other optimization tricks, such as software write-combine buffers ({\swwc}) and streaming~\cite{TAB+13}, are applied as well. In the \textit{join} phase, each worker runs a cache-local hash join on each pair of final partitions from $R$, and $S$.

\subsection{{\cdf}-based Partitioning and Learned Indexing Structures}
\label{sec:background_rmi_cdf_partitioning}

According to statistics, the \textit{Cumulative Distribution Function ({\cdf})} of an input key $x$ is the proportion of keys less than $x$ in a sorted array $A$. Recently,~\cite{KBC+18} has proposed the idea of using {\cdf} as a function to map keys to their relative positions in a sorted index. In particular, the position of each key $x$ can be simply predicted as $CDF(x)*|A|$ where $|A|$ is the index length. Note that, in case of having an exact empirical {\cdf}, this operation will result into placing each key in its exact position in $A$ without error (i.e., sorting $A$). However, obtaining this exact {\cdf} is impractical. Typically, the {\cdf} of input data is estimated in three steps: (1)~sample the input data, (2)~sort the sample, (3)~approximate the {\cdf} of the sorted sample using learned models (e.g., linear regression). Figure~\ref{fig:background_cdf} shows an example on {\cdf} approximation using linear models.

\noindent\textbf{{\cdf}-based Partitioning.} {\cdf} can be used to perform logical partitioning (i.e., bucketization) of the input keys as follows: given an input key $x$, the partition index of each key $x$ is determined as $CDF(x)*|P|$, where $|P|$ is the number of partitions, instead of the actual array length $|A|$ (typically, $|P|$ is much less than $|A|$). Such {\cdf}-based partitioning can be done in a single pass over the data since keys are not required to be exactly sorted within each partition. However, the output partitions (i.e., buckets) are still relatively sorted, and the same {\cdf} can be used to recursively partition the keys within each bucket, if needed, by just scaling up the number of partitions $|P|$. In this paper, we exploit the {\cdf}-based partitioning to improve the performance of sort-based joins as described later in Section~\ref{sec:lsj}

\noindent\textbf{Recursive Model Index ({\rmi}).} The \textit{learned index} idea proposed in~\cite{KBC+18} is based on an insight of rethinking the traditional B+Tree indexes as learned models (e.g., {\cdf}-based linear regression). So, instead of traversing the tree based on comparisons with keys in different nodes and branching via stored pointers, \textit{a hierarchy of models}, namely Recursive Model Index ({\rmi}), can rapidly predict the location of a key using fast arithmetic calculations such as additions and multiplications. Typically, {\rmi} has a static depth of two or three levels, where the root model gives an initial prediction of the {\cdf} for a specific key. Then, this prediction is recursively refined by more accurate models in the subsequent levels, with the leaf-level model making the final prediction for the position of the key in the data structure. As previously mentioned, the models are built using approximated {\cdf}, and hence the final prediction at the leaf-level could be inaccurate, yet error-bounded. Therefore, {\rmi} keeps min and max error bounds for each leaf-level model to perform a local search within these bounds (e.g., binary search), and obtain the exact position of the key. Technically, each model in {\rmi} could be of a different type (e.g., linear regression, neural network). However, linear models usually provide a good balance between accuracy and size. A linear model needs to store two parameters (intercept and slope) only, which have low storage overhead. 

\noindent\textbf{RadixSpline (RS).} {\rs}~\cite{KMR+20} is a 2-levels learned index that followed {\rmi}. It is built in a bottom-up fashion with a single pass over the input keys, where it consists of a set of spline points that are used to approximate the {\cdf} of input keys, and an on-top radix table that indexes $r$-bit prefixes of these spline points. 

A {\rs} lookup operation can be broken down into four steps: (1)~extract the $r$ most significant bits $b$ of the input key $x$, (2)~retrieve the spline points at offsets $b$ and $b+1$ from the radix table, (3)~apply a binary search among the retrieved spline points to locate the two spline points that approximate the {\cdf} of the input key $x$, and (4)~perform a linear interpolation between the retrieved spline points to predict the location of the key $x$ in the underlying data structure. Similar to {\rmi}, {\rs} is based on an approximated {\cdf}, and hence might lead to a wrong prediction. To find the exact location of key $x$, a typical binary search is performed within a small error bounds around the final {\rs} prediction. These error bounds are user-defined and can be tuned beforehand.

According to a recent benchmark study~\cite{MKR+20}, both {\rmi} and {\rs} are considered the most efficient read-only learned indexing structures. Therefore, in this paper, we explore their usage in improving the performance of indexed nested loop joins as described later in Section~\ref{sec:linlj}.

\section{Learned Indexed Nested Loop Join}
\label{sec:linlj}

In this section, we explore using the learned index idea in indexed nested loop joins ({\inlj}). One straightforward approach is to index one input relation, say $R$, with an {\rmi} and then use the other relation, say $S$, to probe the built {\rmi} and check for the join matches. This is an appealing idea because {\rmi}~\cite{KBC+18} typically encodes the built index as few linear model parameters (i.e., small index representation), and in turn improves the {\inlj} cache behavior. However, our experiments showed that using typical {\rmi}s only is not the best option in all cases. This is mainly due to the significant overhead of the "last-mile" search (i.e., correcting the {\rmi} mispredictions using binary search) when having challenging datasets. Therefore, we first propose a new variation of {\rmi} (Section~\ref{sec:linlj_gapped_rmi}) that substantially improves the "last-mile" search performance for general lookup queries (e.g., filtering), yet, with higher space requirements (i.e., less cache-friendly). Then, we provide a customized variation of this index (Section~\ref{sec:linlj_optimized_buffer}) for {\inlj} that exploits a cache-optimized buffering technique to increase the caching benefits while keeping the improved search performance during the {\inlj} processing. \textcolor{black}{Finally, we discuss the idea of using the {\cdf}-based models as a hash function to build and probe a typical hash table in the hash-based {\inlj} (Section~\ref{sec:linlj_rmi_hash})}.

\subsection{Gapped {\rmi} ({\grmi})}
\label{sec:linlj_gapped_rmi}

\begin{figure}
    \begin{center}
    \includegraphics[width=3.2in]{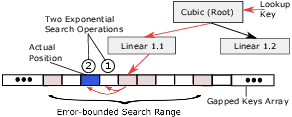}
    \end{center}
    \caption{Example of 2-levels Gapped {\rmi} ({\grmi})}
    \label{fig:linlj_grmi}
\end{figure}

The performance of any {\inlj} algorithm mainly depends on the used index. However, when using {\rmi} in {\inlj} over challenging inputs, the join throughput deteriorates as the input data is difficult to learn and the accuracy of built {\rmi} becomes poor. In this case, the {\rmi} predicts a location which is far away from the true one, and hence the overhead of searching between the error bounds becomes significant. This is because when an {\rmi} is built for an input relation, it never changes the positions of the tuples in this relation. 
A key insight from a recent updatable learned index~\cite{DMY+20}, namely {\alex}, shows that when tuples are inserted using a learned model, the search over these tuples later, using the same model, will be improved significantly. 
Basically, \textit{model-based} insertion places a key at the position where the model predicts that the key should be. As a result, \textit{model-based} insertion reduces the model misprediction errors. 
This inspires us to propose a new read-only {\rmi}-based index, namely Gapped {\rmi} ({\grmi}), that substantially improves the lookup performance over typical {\rmi}, specially for datasets with challenging distributions.   
 
\noindent\textbf{{\grmi} Structure.} The proposed {\grmi} consists of a \textit{typical {\rmi}} and \textit{two gapped arrays}, one for keys and one for payloads. We assume both keys and payloads have fixed sizes. Any payload could be an actual record or a pointer to a variable-sized record, stored somewhere else. The gapped arrays are filled by model-based inserts of the input tuples (described later). The extra space used for the gaps is distributed among the elements of the array, to ensure that tuples are located closely to the predicted position when possible. The number of allocated gaps per input key is a user-defined parameter that can be tuned. For efficient lookup, the gapped arrays are associated with a bitmap to indicate whether each location in the array is filled or is still a gap. Figure~\ref{fig:linlj_grmi} depicts an example of a 2-levels {\grmi}, where we show only the gapped array for keys (white cells are gaps (i.e., empty)).

{\grmi} has two main differences from {\alex}. First, {\grmi} is a \textit{read-only} data structure which is built once in the beginning and then used for lookup operations only. In contrast, {\alex} is more dynamic and can support updates, insertions and deletes. Second, {\alex} is a B-tree-like data structure which has internal nodes containing pointers and storing intermediate gapped arrays. Unlike {\alex}, {\grmi} just has only two \textit{continuous} gapped arrays, for keys and payloads, to perform the "last-mile" search operations in an efficient manner. Note that {\grmi} employs a typical {\rmi} for predicting the key's location, without any traversal for internal nodes as in {\alex}. Therefore, the {\grmi} lookup is extremely efficient compared to {\alex} as shown in our experiments (Section~\ref{sec:evaluation}).  

\noindent\textbf{{\grmi} Building.} Building a {\grmi} has two straightforward steps. First, a typical {\rmi} is constructed for the input keys. Then, the built {\rmi} is used to model-based insert all tuples from scratch in the gapped arrays. \textit{Note that, in {\inlj}, the index is assumed to be existing beforehand}. This is a typical case in any DBMS where indexes (e.g., hash index, B-tree) are general-purpose and built once to serve many queries (e.g., select, join).  \textit{Therefore, the {\grmi} building overhead has no effect on the {\inlj} performance}. 

\noindent\textbf{{\grmi} Lookup.} Given a key, the {\rmi} is used to predict the location of that key in the gapped array of keys, then, if needed, an \textit{exponential search} is applied till the actual location is found. If a key is found, we return the corresponding payload at the same location in the payloads array, otherwise return null. We use the exponential search to find the key in the neighborhood because, in {\grmi}, the key is likely to be very close to the predicted {\rmi} location, and in this case, the exponential search can find this key with less number of comparison checks compared to both sequential and binary searches. Figure~\ref{fig:linlj_grmi} presents a lookup example, where red arrows show the lookup flow. In this example, the {\rmi} prediction is corrected by two exponential search steps only.

\subsection{Buffered {\grmi} {\inlj}}
\label{sec:linlj_optimized_buffer}

\begin{figure}
    \begin{center}
    \includegraphics[width=3.2in]{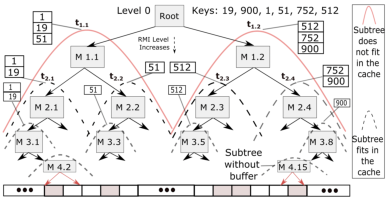}
    \end{center}
    \caption{\textcolor{black}{Example on hierarchical Request Buffers at some {\rmi} models}.}
    \label{fig:linlj_buffer}
\end{figure}

Although leveraging the model-based insertion and exponential search ideas in {\grmi} reduces the number of comparisons and their corresponding cache misses in the "last-mile" search, directly using {\grmi} in {\inlj} is sub-optimal as it introduces another kind of cache misses when iteratively looking up for different keys. Since  keys are stored in quite large gapped arrays in {\grmi}, the probability of finding different keys within the same cacheline is low. To address this issue, we propose to buffer (i.e., temporarily store) the requests for query keys that will be probably accessed from the same part in the \textit{gapped array} (i.e., keys that predict the same models across the {\rmi} levels), and then answer them as a batch. This significantly improves the temporal locality of the index, reduces cache misses, and in turn increases the join throughput.
In this section, we show how we adapt an efficient variation of \textit{hierarchical} buffering techniques~\cite{ZR03, HL06} to improve the throughput of {\inlj} using {\grmi}. We refer to this variation as \textit{Buffered {\grmi}} ({\bgrmi}).

\noindent\textbf{Key Idea.} We view the {\rmi} as a tree with multiple sub-trees rooted at the child models, and the root of each sub-tree $t$ is associated with a fixed-size buffer. We refer to this buffer as \textit{request buffer}, where it temporarily stores all keys that will be predicted using the corresponding child model. \textcolor{black}{Any sub-tree~$t$ can be recursively decomposed into smaller sub-trees at its own child models. Figure~\ref{fig:linlj_buffer} shows an example of sub-trees at different levels of an {\rmi},} where the root of each sub-tree has a request buffer. \textcolor{black}{In this example, the "Root" model has two sub-trees $t_{1.1}$ and $t_{1.2}$ rooted at its child models $M 1.1$ and $M 1.2$, respectively, where for example, the sub-tree $t_{1.1}$ itself has two child sub-trees $t_{2.1}$ and $t_{2.2}$ (with their buffers) rooted at the $M 2.1$ and $M 2.2$ models, respectively, and so on.}

The request buffers for all sub-trees of the indexed relation, say~$R$, are created before the {\inlj} begins. During the {\inlj} operation, the query keys from the non-indexed relation, say $S$, are distributed to the buffers based on the {\rmi} predictions, until a buffer becomes full. When a buffer is full, we flush it by distributing the buffered query keys to its child buffers recursively \textcolor{black}{(e.g., in Figure~\ref{fig:linlj_buffer}, when the buffer of sub-tree $t_{1.1}$ is filled with the query keys $1$, $19$, and $51$, the model $M 1.1$ is then used to distribute these keys to the buffers of its child sub-trees $t_{2.1}$ and $t_{2.2}$, and so on)}. When a query key reaches the leaf-level, the final {\rmi} prediction is performed. When no more query keys are issued, the whole request buffers are flushed in the depth-first order similar to~\cite{HL06}. 

During the rest of this section, \textcolor{black}{we discuss how we define the size of the buffer at the root of any sub-tree $t$, and how the total size of all buffers needed for an {\rmi} satisfies the analytical bound of the \textit{cache complexity} in hierarchical indexes (e.g., trees or {\rmi})~\cite{HL06}}. The cache complexity is defined by the asymptotic number of block transfers between the cache and the memory incurred by a query key.


\noindent\textcolor{black}{\underline{Notation.} For an {\rmi} with a root model $r$, let $n$ be the total number of intermediate and leaf models in this {\rmi} (without the root $r$), $m$ be the fan-out (i.e., number of children per any model in this {\rmi}), and $w$ be the fixed-size, in bytes, of a single model in this {\rmi}. For any sub-tree $t$ in this {\rmi}, let $v_t$ be the size of the buffer associated with the "root" model of $t$, and $g_t$ be the number of "non-root" models in $t$. We also define a function $S(g_t)$ that estimates the total size of buffers associated with the $g_t$ non-root models in any sub-tree $t$.}

\noindent\textcolor{black}{\textbf{Estimating the Buffer Size $v_t$ at the Root of Sub-tree $t$.} The buffer associated with the root of any sub-tree $t$ should consider the following two points. \underline{First}, the expected number of query keys, that will be processed through $t$ and reside in the buffer at its root, depends on the number of the non-root models in $t$ (i.e., $v_t \propto g_t$). Note that $g_t$ is determined by the level of $t$'s root in {\rmi}. Therefore, the closer the level of $t$'s root to the {\rmi} root, the larger buffer size $v_t$ this $t$ needs (e.g., in Figure~\ref{fig:linlj_buffer}, the number of buffered keys in $t_{1.1}$ is larger than in either $t_{2.1}$ or $t_{2.2}$ as the level of $t_{1.1}$'s root is closer to the {\rmi} root). \underline{Second}, the buffer size $v_t$ at $t$'s root should be equal to (or larger than) the total size of $t$'s child sub-trees and their associated buffers (i.e., $v_t \geq S(g_t)$). This is intuitive because if the child sub-trees and all their buffers fit into the cache, they can be reused before eviction from the cache. Therefore, based on these two points, we set $v_t$ to be $wg_t + S(g_t)$, where $wg_t$ and $S(g_t)$ are the total sizes of all the $g_t$ non-root models in $t$ and their associated buffers.}

 
\noindent\textbf{Total Buffers' Size Upper Bound.} Here, we present a worst-case bound on the total size of the request buffers needed by an {\rmi}. Note that our goal is to maximize the {\inlj} performance with {\rmi} and its variant {\grmi}, not to minimize the buffer sizes. Therefore, this analysis is useful for gaining intuition about request buffer sizes, but does not reflect worst-case guarantees in practice.

\begin{proposition}
\textit{Given an {\rmi} with number of intermediate and leaf models $n$ (without the root), child fan-out $m$ and model size $w$, the total size, in bytes, of the request buffers at non-root models of this {\rmi} (i.e., $S(n)$) is $O(n \log_m n)$.}
\label{prop:request_buffer_size}
\end{proposition}

\begin{proof}
As previously mentioned, an {\rmi} with fan-out of $m$ can be viewed as a tree that starts from the root model $r$. Such tree consists of $m$ sub-trees; each sub-tree starts from one of the $m$ child models of $r$. Since the whole {\rmi} tree contains $n$ non-root models, then each of these sub-trees roughly contains $\frac{n}{m}$ nodes. From our definitions, the total size of buffers needed for each sub-tree $t$, without the buffer needed for the root of~$t$, is $S({\frac{n}{m}})$. In addition, the buffer size at the root of~$t$ (i.e., $v_t$) is $w\frac{n}{m} + S(\frac{n}{m})$. Hence, we can redefine $S(n)$ in terms of $S(\frac{n}{m})$ to be as follows:

\begin{equation}
S(n) = m (w\frac{n}{m} + 2S(\frac{n}{m})) = wn + 2m S(\frac{n}{m}) 
\label{eq:buffer_size}
\end{equation}

By solving this equation recursively, we end up with $S(n) = wn (1 + 2 + \ldots + \log_m n) + (2m)^{\log_m n}$, where $(1 + 2 + \ldots + \log_m n)$ is a geometric series. After eliminating lower-order terms of this series and assuming $n \gg m$, $S(n)$ becomes $O(n \log_m n)$ ($w$ is omitted as it is constant). \\

\end{proof}

\textcolor{black}{The base case of Equation~\ref{eq:buffer_size} occurs at the leaf-level (i.e., $S(0) = 0$). Note that we can estimate the value of $S(g_t)$ at any sub-tree $t$ by replacing $n$ in Equation~\ref{eq:buffer_size} with $g_t$, and recursively solve the equation starting from the {\rmi} level of the $t$'s root till the leaf}. Also, note that the assumption of $n \gg m$ (used in the proof) is valid as, in practice, the {\rmi} optimizer~\cite{MZK20} typically models a large input relation with complex distribution using {\rmi} of 2 to 3 levels (excluding the root), and a fan-out $m$ of 1000. In the case of 2 levels, for example, the $\frac{n}{m}$ ratio becomes 1000, which is relatively large.

\noindent\textcolor{black}{\textbf{Optimizing the Number of Buffers in {\rmi}.} Assigning a buffer for "every" sub-tree in {\rmi} is unnecessary and will result in cache misses when accessing the buffers themselves, specially in case of having large {\rmi} (i.e., large number of buffers). Therefore, as an optimization, we keep assigning the buffers on the {\rmi} sub-trees, level by level (starting from level 1), until a sub-tree $t$ and its buffers (excluding the buffer of this sub-tree's root) totally fit into the cache and then stop. For example, in Figure~\ref{fig:linlj_buffer}, we start by assigning buffers to sub-trees $t_{1.1}$ and $t_{1.2}$ ({\rmi} level 1). Then, we assign buffers to their child sub-trees $t_{2.1}, t_{2.2}, t_{2.3}$ and $t_{2.4}$ ({\rmi} level 2). Here, we assume that $t_{2.1}, t_{2.2}, t_{2.3}$ and $t_{2.4}$ and the buffers at their child sub-trees ({\rmi} level 3) can fit in the cache (highlighted with dotted lines). Therefore, we stop assigning buffers to the sub-trees at higher levels (i.e., {\rmi} level 4)}.

\noindent\textbf{Buffers Cache Complexity.} Assume $B$ is the cache line size in bytes, and $C$ is the cache capacity in bytes. According to~\cite{HL06}, the amortized cache complexity of one probe using the buffering scheme of a hierarchical index $X$ (e.g., trees or {\rmi}) should be less than or equal to $O(\frac{1}{B} \log_C n)$, where $n$ is the number of index nodes ({\rmi} models in our case). Otherwise, this buffering scheme is considered inefficient. Here, we show that our request buffers match the analytical bound of probing cache complexity, and hence proved to be efficient. \\

\begin{proposition}
\textit{The amortized cache complexity of one probe using our request buffers is $O(\frac{1}{B} \log_C n)$, where $n$ is the number of {\rmi} models, and hence matches the analytical bound from~\cite{HL06}.}
\label{prop:request_buffer_cache_complexity}
\end{proposition}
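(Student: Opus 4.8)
The plan is to charge the cost of a single probe against the levels of the {\rmi} tree, showing that at each level the amortized number of cache misses is $O(\frac{1}{B})$, and that the number of levels where a miss can actually occur is $O(\log_C n)$. First I would recall the buffering scheme: a query key is never accessed in isolation but is flushed in a batch only when a request buffer fills. By Proposition~\ref{prop:request_buffer_size}, the buffer at the root of an {\rmi} (or sub-{\rmi}) with $n'$ models has size $n' + S(n')$, which is designed precisely so that the sub-{\rmi}'s models together with all of its (non-root) buffers fit in cache once $n'$ is small enough. Concretely, there is a threshold level $\ell^\ast$ below which an entire sub-tree and its buffers occupy $O(C)$ bytes; since each level shrinks the sub-tree size by a factor of $m$, and a sub-tree of $k$ models plus buffers occupies roughly $k \log_m k$ words, this threshold is reached when $k \approx C / \log_m C$, i.e. after $\log_m\!\big(n / (C/\log_m C)\big) = O(\log_C n)$ levels from the top (using $\log_m x = \Theta(\log_C x)$ for the ranges of $m$ and $C$ that occur in practice). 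Below $\ell^\ast$ the whole remaining traversal touches only cache-resident data and contributes $O(1)$ misses total — in fact $O(1/B)$ amortized once we account for the final payload access being shared across a flushed batch.

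Next I would bound the cost above the threshold. For each of the $O(\log_C n)$ "upper" levels, a key spends time only when its buffer is flushed. A buffer of size $g$ (in models, hence $\Theta(g)$ keys) is flushed in one sweep that reads the buffer and writes its contents into the $m$ child buffers; this is a streaming access pattern, so it costs $O(g/B + m/B)$ cache misses, amortized to $O(1/B)$ per key in the batch (using $g \gg m$, which follows from the buffer sizes in Proposition~\ref{prop:request_buffer_size} and the assumption $n \gg m$). Summing $O(1/B)$ over the $O(\log_C n)$ upper levels gives the claimed $O(\frac{1}{B}\log_C n)$ amortized bound, and adding the $O(1/B)$ from the cache-resident lower part does not change the order. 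Finally I would note this matches the {\rmi} structure exactly: the root has no buffer, so the first level is handled by the standard argument that the root model is a single hot cacheline, and the leaf-level {\rmi} prediction plus the exponential "last-mile" search land inside the already-counted cache-resident region for keys routed to a common sub-tree.

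The main obstacle is making the "fits in cache" step rigorous: I must verify that the buffer-size recurrence of Proposition~\ref{prop:request_buffer_size} yields, at the threshold level, a sub-tree-plus-buffers footprint that is genuinely $O(C)$ bytes and not merely $O(C \cdot \mathrm{polylog})$, since a stray logarithmic factor here would degrade the final bound. This requires choosing the threshold as the largest sub-tree size $k$ with $k \cdot (1 + 2 + \dots + \log_m k) = O(C)$ and confirming that one more level up still leaves us within $O(\log_C n)$ levels of the original root; the interplay between the $\log_m$ in the buffer size and the $\log_C$ in the target bound is delicate but works out because $m$ and $C$ are both large constants of comparable logarithmic scale in the regime the {\rmi} optimizer~\cite{MZK20} operates in. A secondary subtlety is the amortization: a single probe in isolation could, in the worst case, trigger a cascade of flushes, so I would state the bound as amortized over a sequence of probes (equivalently, over the keys of the probing relation $S$), exactly as in~\cite{HL06}, rather than as a worst-case per-probe guarantee — consistent with the caveat already made before Proposition~\ref{prop:request_buffer_size}.
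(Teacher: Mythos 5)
Your overall plan---batch probes in buffers, amortize the cost of bringing index data into cache over a whole buffer flush, and pay $O(1/B)$ per probe per ``unit'' of the traversal---is the same amortization philosophy as the paper's proof, and your handling of the cache-resident bottom region and of the amortized (rather than worst-case per-probe) framing agrees with it. The gap is in the unit you charge against. You charge $O(1/B)$ per \emph{level} of the upper region and then assert that the number of buffered levels above the cache-resident threshold is $O(\log_C n)$. That count does not hold in general: with the threshold sub-tree size $k^\ast \approx C/(w\log_m C)$ (where $w$ is the per-model size in bytes), the number of levels above it is $\log_m\bigl(n/k^\ast\bigr) = \Theta\bigl(\log_m(n/C)\bigr)$, which exceeds $\log_C n$ by roughly a factor of $\log_m C$ once $n \gg C$ and the fan-out $m$ is much smaller than the cache capacity $C$. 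Your parenthetical ``using $\log_m x = \Theta(\log_C x)$'' is precisely the missing hypothesis $\log m = \Theta(\log C)$; neither the proposition nor the paper's proof assumes any relation between $m$ and $C$, so as written your per-level accounting only establishes $O(\frac{1}{B}\log_m n)$, not the claimed $O(\frac{1}{B}\log_C n)$.

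The paper's proof sidesteps this by amortizing at a coarser granularity: it decomposes the {\rmi} into maximal sub-trees that, together with their internal (non-root) buffers, fit in cache, i.e. $(k\log_m k)\,w < C$ via Proposition~\ref{prop:request_buffer_size}, so each such sub-tree spans about $\log_m(C/w)$ levels rather than one. A single flush of such a sub-tree's root buffer delivers $\Theta\bigl((k\log_m k)w/q\bigr)$ keys (with $q$ the probe key size) while the sub-tree and its internal buffers stay resident, so the $O\bigl((k\log_m k)w/B\bigr)$ misses for loading it amortize to $O(1/B)$ per probe per sub-tree, and each probe crosses only $O\bigl(\log_m n/\log_m k\bigr) = O(\log_C n)$ such sub-trees regardless of how $m$ compares to $C$. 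Replacing your per-level charge in the upper region by this per-cache-fitting-sub-tree charge repairs the argument without the extra assumption. Two smaller points: a flush scattering $g$ keys into $m$ child buffers costs $O(gq/B + m)$ misses (at least one miss per child-buffer tail line), not $O(g/B + m/B)$, so the $O(1/B)$ amortization needs buffer capacity on the order of $mB$ rather than merely $g \gg m$ (which the buffer sizes of Proposition~\ref{prop:request_buffer_size} do provide at upper levels); and the paper, like you, claims the bound only in this amortized sense.
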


\begin{proof}
As mentioned before, we apply the request buffers on the {\rmi} index, starting from the first level, until a sub-tree $t$ and its buffers (excluding the buffer of this sub-tree's root) totally fit into the cache. Assume the number of models in this sub-tree $t$ is $k$. Then, using Proposition~\ref{prop:request_buffer_size}, the total size of the sub-tree $t$ and its buffers, excluding the buffer at $t$'s root, can be represented using this inequality $w(k \log_m k) < C$. By taking $\log_m$ for both sides, $(\log_m k + \log_m \log_m k) < \log_m \frac{C}{w}$. After eliminating lower-order terms, $\log_m k < \log_m \frac{C}{w}$.

Assume the size of probe key in bytes is $q$. Since we are using fixed-size buffers, the amortized cost of each probe on a sub-tree that fits in the cache is $O(\frac{1}{B})$, because the cost of loading sub-tree into the cache is amortized among $\frac{w(k \log_m k)}{q}$ probes. In addition, the number of sub-trees fitting in the cache that each probe should go through is $O(\frac{\log_m n}{\log_m k}) = O(\frac{\log_m n}{\log_m C}) = O(\log_C n)$. As a result, the amortized cost of each probe on the {\rmi} index is $O(\frac{1}{B} \log_C n)$, and meets the analytical bound in~\cite{HL06}. \\

\end{proof}

\subsection{Learned-based Hash {\inlj}}
\label{sec:linlj_rmi_hash}

\textcolor{black}{Instead of directly using a learned index (e.g., {\rmi} or {\rs}~\cite{KMR+20}) in {\inlj} (e.g., {\linlj}), we can employ a variant of a hash index that only uses the {\cdf}-based models, as a \textit{hash function}, to build and probe a typical hash table~\cite{SVH+21} (i.e., just using the {\cdf}-based models without storing keys in {\rmi}-specific data structures, such as gapped arrays, or applying any buffering optimization)}. Basically, if the input keys are generated from a certain distribution, then the {\cdf} of this distribution maps the data uniformly in the range $[0, 1]$, and the {\cdf} will behave as an order-preserving hash function in a hash table. In some cases, the {\cdf}-based models can over-fit to the original data distribution, and hence result in less number of collisions compared to traditional hashing, which would still have around 36.7\% regardless of the data distribution (based on the birthday paradox). For example, if the data is generated in an incremental time series fashion, or from sequential data with holes (e.g., auto generated IDs with some deletions), the predictability of the gaps between the generated elements determines the amount of collisions~\cite{SVH+21}, and can be captured accurately using learned models. Note that we build the {\cdf}-based models using the entire input relation, and hence its accuracy becomes high. Our experimental results (Section~\ref{sec:evaluation}) show the effectiveness of the learned-based hash variant in some real and synthetic datasets.



\section{Learned Sort-based Join}
\label{sec:lsj}

Sort-based join ({\sj}) is an essential join operation for any database engine. Although its performance is dominated by other join algorithms, like partitioned hash join~\cite{TAB+13}, it is still beneficial for many scenarios. For example, if the input relations are already sorted, then {\sj} becomes the best join candidate as it will skip the sorting phase, and directly perform a fast merge-join only. Another scenario is applying grouping or aggregation operations (e.g.,~\cite{CS94, AAD+96}) on join outputs. In such situation, the query optimizer typically picks a {\sj} as it will generate nearly-sorted inputs for the grouping and aggregate operations. In this section, we propose an efficient learned sort-based join algorithm, namely {\lsj}, which outperforms the existing state-of-the-art {\sj} algorithms.

\subsection{Algorithm Design}
\label{sec:lsj_overview}

One straightforward design to employ the learned models in {\sj} is to directly use a LearnedSort algorithm (e.g.,~\cite{KVC+20}) as a black-box replacement for the sorting algorithms used in the state-of-the-art {\sj} techniques, such as {\mpsm}~\cite{AKN12} and {\mway}~\cite{TAB+13}. This is an intuitive idea specially that the sorting phase is the main bottleneck in {\sj}, and using the LearnedSort is expected to improve the sorting performance over typical sorting algorithms as shown in~\cite{KVC+20}. However, based on our experiments, such design leads to a sub-par performance because LearnedSort is typically applied on different input partitions in parallel, and some of its processing steps (such as sampling, model building, and {\cdf}-based partitioning) can save redundant work, if they are designed to be shared among different partitions and subsequent {\sj} phases. 

Based on this insight, we design another {\sj} variant, namely {\lsj}, that supports work sharing, and hence reduces the overall algorithm latency. {\lsj} achieves that through building a {\cdf}-based model that can be reused to improve the different {\sj} phases (e.g., sort and join) by just scaling up its prediction degree. In addition to that, {\lsj} implements the good learned lessons from the {\sj} literature. In particular, {\mway}~\cite{TAB+13} shows that using a sorting algorithm with good ability to scale up on the modern hardware, like AVX sort~\cite{BAT+13}, is more desirable in a multi-core setting, even if it has a bit higher algorithmic complexity. Moreover, {\mpsm}~\cite{AKN12} concludes that getting rid of the merge phase increases the {\sj} throughput as merging is an intensive non-local write operation. {\lsj} leverages these two hints along with the {\cdf}-based model in the implementation of its phases as described in the rest of this section.

\subsection{{\lsj} Partition Phase}
\label{sec:lsj_partition_phase}

\begin{figure}
    \begin{center}
    \includegraphics[width=3.2in]{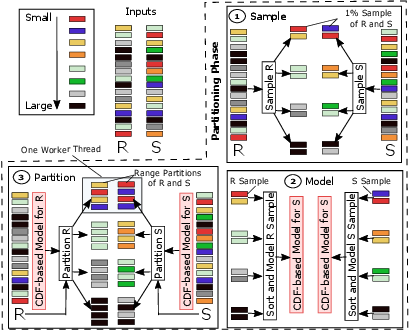}
    \end{center}
    \caption{\textcolor{black}{{\lsj} partitioning phase (the color of a key reflects how large it is).}}
    \label{fig:lsj_algorithm_1}
\end{figure}

The objective of this phase is to perform a range partitioning for both $R$ and $S$ relations among the different worker threads, such that the output partitions of each relation become relatively sorted and allow the {\lsj} subsequent phases to work totally in parallel (i.e., without any synchronization). 

\noindent\textbf{Key Idea.} The partitioning idea is straightforward. We build a {\cdf}-based model for each input relation based on a small data sample, and then use this model (instead of the traditional radix partitioning~\cite{TAB+13}) to range-partition the relation among the different worker threads. \textcolor{black}{Note that we follow the partitioning theme of {\mway}~\cite{TAB+13} where we range-partition both relations, not just the smaller one as in {\mpsm}~\cite{AKN12}. Although this makes the partitioning phase has a bit worse performance, the extra partitioning overhead is not a major bottleneck and already dominated by its effect on improving the sorting phase as described later (check Section~\ref{sec:lsj_sort_phase}).}


\noindent\textbf{Model Building.} \textcolor{black}{As shown in the first two steps of Figure~\ref{fig:lsj_algorithm_1}, we build the model for each relation as follows: (1)~in parallel, each worker samples around 1\% of the input tuples, (2)~the sample in each worker is then sorted, and after that, all sorted samples from the different workers are merged into one worker to train and build the model.}

\noindent\textbf{Range Partitioning.} \textcolor{black}{Inspired by~\cite{AKN12, TAB+13}, we implement this operation (the third step in Figure~\ref{fig:lsj_algorithm_1}) as follows}: (1)~each worker locally partitions its input chunk using the built model. While partitioning, each worker keeps track of the partitions' sizes and builds a local histogram for the partitions at the same time, then (2)~the local histograms are combined to obtain a set of prefix sums where each prefix sum represents the start positions of each worker’s partitions within the target workers, and finally (3)~each worker \textit{sequentially} writes its own local partitions to the target workers. Note that there is no separate step for building the local histograms, as we directly use the model to partition the data into an over-allocated array to account for skewed partitions. This approach is adapted from a previous work on the LearnedSort~\cite{KVC+20} and proves its efficiency. Typically, with a good built model, the partitions are balanced and the over-allocation overhead is 2\%.

\noindent\textbf{Software Write-Combine ({\swwc}) Buffers.} To further reduce the overhead of local partitioning, we used a well-known {\swwc} optimization, combined with non-temporal streaming~\cite{TAB+13, SCD16}. Basically, {\swwc} reduces the TLB pressure by keeping a cache-resident buffer of entries for each partition, such that these buffers are filled first. When a buffer becomes full, its entries are flushed to the final partition. With non-temporal streaming, these buffers bypass all caches and are written directly to DRAM to avoid polluting the caches with data that will never be read again.


\subsection{{\lsj} Sort Phase}
\label{sec:lsj_sort_phase}

\begin{figure}
    \begin{center}
    \includegraphics[width=3.2in]{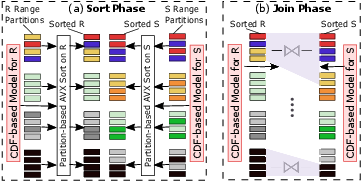}
    \end{center}
    \caption{\textcolor{black}{{\lsj} sorting and joining phases (Continued example from Figure~\ref{fig:lsj_algorithm_1}).}}
    \label{fig:lsj_algorithm_2}
\end{figure}

\begin{figure}
    \begin{center}
    \includegraphics[width=3.2in]{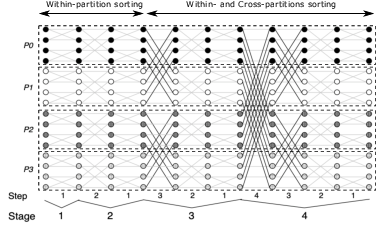}
    \end{center}
    \caption{Example of a bitonic sorting network for 16 items with 4 partitions.}
    \label{fig:lsj_bitonic}
\end{figure}

Once the partitioning phase generates the output $R$ and $S$ range partitions for the different workers, each worker sorts its own partitions locally. At the end of the sorting phase, both relations will be globally sorted and ready to be directly joined as described in Section~\ref{sec:lsj_join_phase}. Typically, the sorting phase is implemented in a multi-threaded environment using an {\simd} variation of the bitonic sort, namely AVX sort~\cite{BAT+13}. Although the complexity of bitonic sort is $O (n \log^2 n)$, which is a bit higher than the standard $O(n \log n)$ complexity, the bitonic sort is better for parallel implementation on modern hardware as it compares elements using min/max instructions in a predefined sequence that never depends on the data. Here, we propose an new efficient variation of the AVX sort that exploits {\cdf}-based partitioning to reduce its number of comparisons. We also discuss its complexity and trade-offs.

\noindent\textbf{Key Idea.} \textcolor{black}{On each worker, the proposed AVX sort variation takes the local range partition (i.e., output of the partitioning phase (Section~\ref{sec:lsj_partition_phase})) of one relation, say $R$, along with the {\cdf}-based model of this relation as inputs (Figure~\ref{fig:lsj_algorithm_2}(a)). It first uses the model to further divide the local partition into non-overlapping partitions $p$ that are relatively sorted, then apply the AVX sort on each partition independently.} The number of partitions $p$ is tuned to provide a good balance between parallelism and caching performance. Note that we reuse the same {\cdf}-based model from the partitioning phase, but we scale its prediction up for higher degree of partitioning.

\noindent\textbf{Complexity Analysis.} To compute the complexity of the new AVX sort, let $n$ be the number of keys to be sorted, and $p$ be the number of {\cdf}-based partitions. Each partition has $\frac{n}{p}$ keys roughly. Figure~\ref{fig:lsj_bitonic} shows an example of a typical bitonic sorting network (the core of AVX sort) for 16 keys ($n = 16$) and 4 partitions ($p = 4$). It also highlights the type of each sorting comparison, whether it occurs within the same partition (grey arcs) or across different partitions (black arcs). As shown, the first $\log \frac{n}{p}$ stages execute completely local (within the same partition). For a subsequent stage $\log \frac{n}{p} + k$ where $ 1 \leq k \leq \log p$, the first $k$ steps compare keys from different partitions, while the last $\log \frac{n}{p}$ steps are completely local. As mentioned before, the total complexity of typical bitonic sort is $O (n \log^2 n)$, which is basically the total number of comparisons whether they are within the same partitions or not. However, using our proposed variation, we first need $O(n)$ for the {\cdf}-based partitioning, and then a total of $O(\frac{1}{2} n (\log \frac{n}{p}) (\log \frac{n}{p} + 1))$ for the comparisons occurring in the first $\log \frac{n}{p}$ stages (within the same partition only). Note that our algorithm will skip the remaining stages as the partitions are already relatively sorted. So, the total complexity of our algorithm is $O(n + \frac{1}{2} n (\log \frac{n}{p}) (\log \frac{n}{p} + 1)) = O(n \log^2 \frac{n}{p})$.  

\noindent\textbf{Partitioning Fan-out.} As shown from the complexity analysis, the number of partitions $p$ plays a key role in improving the AVX sort throughput. Increasing $p$ leads to creating more small partitions that need less number of AVX sort stages to be sorted. In addition, these partitions will fit nicely in the cache, and hence improves the temporal locality during sorting as well. On the other hand, having a large number of partitions may lead to a different type of cache problem. It may cause TLB misses if the number of created partitions is too large. We found that estimating the value of $p$ to be around 60\% of the number of L1/L2 TLB entries provides a good balance between TLB misses and cache fitting.


\subsection{{\lsj} Join Phase}
\label{sec:lsj_join_phase}

Since the two relations $R$ and $S$ become globally sorted, we skip any merge phase and directly merge-join them.

\noindent\textbf{Key Idea.}  Assuming that $S$ is the smaller relation, the basic idea is to perform a join between each partition from $R$ and all its \textit{overlapping} partitions from $S$ (Note that these partitions are the final ones generated in the sorting phase, right before applying the AVX sort). 

\noindent\textbf{Chunked-Join.} The join operation is performed in two steps: 1)~for any partition in $R$, we identify the start and the end positions of its overlapping partitions in $S$ \textcolor{black}{(e.g., the first partition from $R$ only overlaps with the first two partitions from $S$ in Figure~\ref{fig:lsj_algorithm_2}(b))}. To do that, we query the {\cdf}-based model of $S$ - scaled up by the same factor used in the sorting phase of $S$ to generate its partitions - with the values of the first and the last keys of this partition from $R$ to return the corresponding first and last overlapping partitions from $S$, namely $f_S$ and $l_S$, respectively. Then, 2)~we merge-join any partition from $R$ with each partition in $S$ belongs to the corresponding overlapping range $[f_S, l_S]$ (including $f_S$ and $l_S$) to find the final matching tuples. Note that all joins from different workers are done in parallel.

We reuse the same fine-grained partitions from the sorting phase in the join operation for two main reasons. First, this ensures using the models correctly when predicting the overlapping partitions as we reuse the same {\cdf} scale up factor when querying the same model. Second, such partitions are probably small enough to fit in the cache (their sizes are already tuned for the AVX sort using the partitioning fan-out).

\subsection{{\lsj} Complexity Analysis}
\label{sec:lsj_complexity_analysis}
Assume $W$ to be the number of worker threads used to run~{\lsj}. Let $N_R$ and $N_S$ to be the number of tuples in relations $R$ and $S$, respectively. Let $S_R$ and $S_S$ to be the number of sampled tuples from relations $R$ and $S$, respectively, to build their {\cdf}-based models. Also, assume $P_R$ and $P_S$ to be the total number of final sorted partitions in relations $R$ and $S$, respectively, where $P_R \geq W$ and $P_S \geq W$. Assume $O_{R}$ to be the maximum number of workers that have overlapping partitions in $R$ with any tuple in $S$ ($0 \leq O_R \leq W$). Similarly, assume $O_{S}$ to be the maximum number of workers that have overlapping partitions in $S$ with any tuple in $R$ ($0 \leq O_S \leq W$). We can roughly compute the complexity of {\lsj} for each worker thread to be the total of the following complexities, corresponding to its different phases:

\begin{itemize}
    \item $[\frac{S_R + S_S}{W} + S_R \log S_R + S_S \log S_S]$ (for sampling and building the {\cdf}-based models for $R$ and $S$)
    \item $[\frac{N_R + N_S}{W}]$ (for range partitioning two chunks from $R$ and $S$ with sizes $\frac{N_R}{W}$ and $\frac{N_R}{S}$, respectively)
    \item $[\frac{N_R}{W} \log^2 \frac{N_R}{P_R} + \frac{N_S}{W} \log^2 \frac{N_S}{P_S}]$ (for sorting two chunks from $R$ and $S$ with sizes $\frac{N_R}{W}$ and $\frac{N_S}{W}$, respectively)
    \item $[\frac{N_R}{W} O_S + \frac{N_S}{W} O_R]$ (for processing two chunks from $R$ and $S$ with sizes $\frac{N_R}{W}$ and $\frac{N_S}{W}$, respectively, in the join phase)
\end{itemize}





\section{Learned Hash-based Join}
\label{sec:lhj}

Here, we discuss our trials to improve the performance of hash-based joins ({\hj}) using {\rmi} and {\cdf}-based partitioning (we refer to them here as learned models). Similar to learned-based hash {\inlj} (Section~\ref{sec:linlj_rmi_hash}), the main idea is to replace the hash functions used in both non-partitioned ({\npj}) and partitioned ({\pj}) hash joins with learned models. Unfortunately, learned models showed a poor performance (i.e., high overall latency) in both {\npj} and {\pj} algorithms. This is mainly because of the poor accuracy of the model used. \textcolor{black}{In the case of learned-based hash {\inlj}, a pre-built learned index is assumed to exist on one of the input relations (i.e., indexing time is not included in the join time). Therefore, the learned models in such index can be built offline from the entire dataset to accurately capture the underlying data distribution, and have a very competitive performance with hashing}. However, in the {\hj} algorithms, \textcolor{black}{the input relations are not pre-indexed (e.g., intermediate outputs of two operators). Therefore, the learned models need to be built \textit{during the join execution}, which is infeasible using the entire input relations as it will significantly increase the join latency}. Therefore, we build the learned models with a small data sample (typically ranges from 1\% to 10\% of the input relations). This makes the model accuracy poor, and leads to inefficient hash tables building and probing in both ({\hj}) algorithms (Check Section~\ref{sec:evaluation} for more evaluation results).

\section{Experimental Evaluation}
\label{sec:evaluation}

In this section, we experimentally study the performance of the learned and non-learned variations of indexed nested loop joins ({\inlj}), sort-based joins ({\sj}) and hash-based joins ({\hj}). Section~\ref{sec:experimental_setup} describes the competitors, datasets, hardware and metrics used. Section~\ref{sec:evaluation_results_all} analyzes the performance under different dataset types, sizes and characteristics. It also analyzes the joins under different threading settings and via performance counters (e.g., cache misses). Sections~\ref{sec:evaluation_results_inlj_only} and ~\ref{sec:evaluation_results_sj_only} analyze the performance while tuning specific parameters for the learned {\inlj} and {\sj} algorithms, respectively.

\subsection{Experimental Setup}
\label{sec:experimental_setup}


\noindent\textbf{Competitors.} For {\inlj}, we evaluate two categories of indexes.

\underline{\textit{Learned Indexes}.} We provide the following three variants of {\rmi}: (1)~typical {\rmi} ({\rmiinlj}) as described in Section~\ref{sec:background_rmi_cdf_partitioning}, and both (2)~gapped ({\linlj}) and (3)~buffered gapped {\rmi} ({\blinlj}) as described in Section~\ref{sec:linlj}. We also provide a variant of a learned-based hash index ({\rshashinlj}), that uses {\rs}~\cite{KMR+20} as a hash function (described in Section~\ref{sec:linlj_rmi_hash}). We selected {\rs} instead of {\rmi} in this variant to explore more learned models in our experiments. Finally, we include a variant of the updatable learned index ({\alexinlj}), that uses {\alex}~\cite{DMY+20}, which is very relevant to our proposed {\grmi}. The implementations of {\rmi}, {\rs} and {\alex} are obtained from their open-source repositories~\cite{RS20, RMI20, ALEX20}. The {\rmi} hyper-parameters are tuned using CDFShop~\cite{MZK20}, an automatic {\rmi} optimizer. {\rs} is manually tuned by varying the error tolerance of the underlying models. {\alex} is configured according to the guidelines in its published paper~\cite{DMY+20}.

\underline{\textit{Non-Learned Indexes}.} We also report the {\inlj} performance with three tree-structured indexes (the Adaptive Radix Trie~\cite{LKN13} ({\arttreeinlj}), the Cache-Sensitive Search Trees~\cite{HYF+08} ({\csstreeinlj}), and a buffered variant of the Cache-Sensitive Search Tree ({\bcsstreeinlj}), which utilizes the proposed hierarchical buffering in Section~\ref{sec:linlj_optimized_buffer}\footnote{The buffer is defined per an index tree node instead of an {\rmi} model.}), and two hash indexes (a bucket chaining hash index ({\hashinlj}), and a cuckoo hash index ({\cuckooinlj})). We build each index using the entire input relation (i.e., inserting every key). In addition, the implementations of tree-structured indexes are provided by their original authors, while the cuckoo hash index is adapted from its standard implementation that is used in~\cite{SOSD20}. 

For {\sj}, we compare our proposed {\lsj} (Section~\ref{sec:lsj}) with four sort-based join baselines, coming from the stat-of-the-art algorithms {\mway} and {\mpsm}. Unless otherwise mentioned, we use the optimized variant of our proposed algorithm, referred to as {\blsj}, which includes the {\swwc} buffering and non-temporal streaming optimizations. For {\mway}, we have two baselines: (1)~an adapted variation of the original open-source implementation in~\cite{MWAY20} that works with AVX 512, referred to as {\mwaysj} (tuned its parameters according to a recent experimental study~\cite{SCD16}), and (2)~a variation of {\mway}, referred to as {\mwaylssj}, that uses the LearnedSort algorithm~\cite{KVC+20} as a black-box in its sorting phase. Regarding {\mpsm}, we have two baselines: (1)~a best-effort implementation of the original algorithm~\cite{AKN12} on our own (unfortunately, the original code is not available for public), referred to as {\mpsmsj}, and (2)~a variation of {\mpsm}, referred to as {\mpsmlssj}, that uses the LearnedSort algorithm~\cite{KVC+20} as a black-box in its sorting phase.

For {\hj}, we use the reference implementations of non-partitioned and partitioned hash joins from~\cite{MWAY20}, referred to them as {\npjhj} and {\pjhj}, respectively, and adapted two learned variations from them by replacing their hash functions with learned models, referred to them as {\lnpj} and {\lpj}. We followed the guidelines in~\cite{SCD16} to optimize the performance of all {\hj} algorithms.

\noindent\textbf{Datasets.} We evaluate all join algorithms with real and synthetic datasets. For real datasets, we use three datasets from the {\sosd} benchmark~\cite{MKR+20}, where we perform a self-join with each dataset. Any dataset is a list of unsigned 64-bit integer keys. We generate random 8-byte payloads for each key. The real datasets are:
\begin{itemize}
    \item {\fb}: 200 million randomly sampled Facebook user IDs.
    \item {\osm}: 800 million cell IDs from Open Street Map. 
    \item {\wiki}: 200 million timestamps of edits from Wikipedia.
\end{itemize}

For synthetic datasets, unless otherwise stated, we generate three datasets with different data distributions for each input relation (either $R$ or $S$). Any dataset consists of a list of 32-bit integer keys, where each key is associated with 8-byte payload. Each dataset has five size variants (in number of tuples): 16, 32, 128 and 640, and 1664 million tuples. All of them are randomly shuffled. The synthetic datasets are:
\begin{itemize}
    \item {\seqh}: sequential IDs with 10\% random deletes (holes). 
    \item {\unif}: uniform distribution, with min = 0 and max = 1, multiplied by the size, and rounded to the nearest integer.
    \item {\lognorm}: lognormal distribution with $\mu = 0$ and $\sigma = 1$ that has an extreme skew (80\% of the keys are concentrated in 1\% of the key range), multiplied by the dataset size, and rounded to the nearest integer.
\end{itemize}

\noindent\textbf{Hardware and Implementation.} All experiments are conducted on an Arch Linux machine with 256 GB of RAM and an Intel(R) Xeon(R) Gold 6230 CPU @ 2.10GHz with Skylake micro architecture (SKX) and L3 cache of 55MiB. \textcolor{black}{The machine has 2 NUMA nodes, each has 40 CPUs}. All approaches are in C++, and compiled with GCC (11.1.0). 

\noindent\textbf{Metrics.}  We use the join throughput as the default metric. We define it as in~\cite{SCD16} to be ratio of the sum of the relation sizes (in number of tuples) and the total runtime $\frac{|R| + |S|}{|runtime|}$.

\noindent\textbf{Default Settings.} Unless otherwise stated, all reported numbers are produced with running 32 threads. For {\blinlj}, we use 4 gaps per key. To fairly compare {\blinlj} with hash-based indexes, we build the hash indexes to match the gaps setting of {\blinlj} by (1)~using a bucket size of 4 in the block chaining index ({\hashinlj}), and (2)~creating a 2-table Cuckoo hash index ({\cuckooinlj}), where the size of these tables are 4 times the size of the input relation (i.e., load factor 25\%). \textcolor{black}{For both {\blinlj} and {\bcsstreeinlj}, we keep assigning buffers (as mentioned in Section~\ref{sec:linlj_optimized_buffer}), starting from the first index level (root is level 0), till we find that a sub-tree (and its buffers), rooted at a model (or an index node in case of {\bcsstreeinlj}) in level $l$, can fit in the L3 cache. Then, we stop assigning buffers to sub-trees at levels higher than $l$}. For {\blsj}, we use a partitioning fan-out of 10000. For any {\lsj} variant, the number of range (i.e., non-overlapping) partitions is set to the number of running threads to ensure their work independence in the sorting and joining phases. For datasets, all duplicate keys are removed to make sure that {\arttreeinlj} properly works (a separate experiment for duplicates exists).


\subsection{{\inlj}, {\sj}, and {\hj} Evaluation Results}
\label{sec:evaluation_results_all}
\begin{figure*}
     \begin{center}
     \includegraphics[width=6.8in]{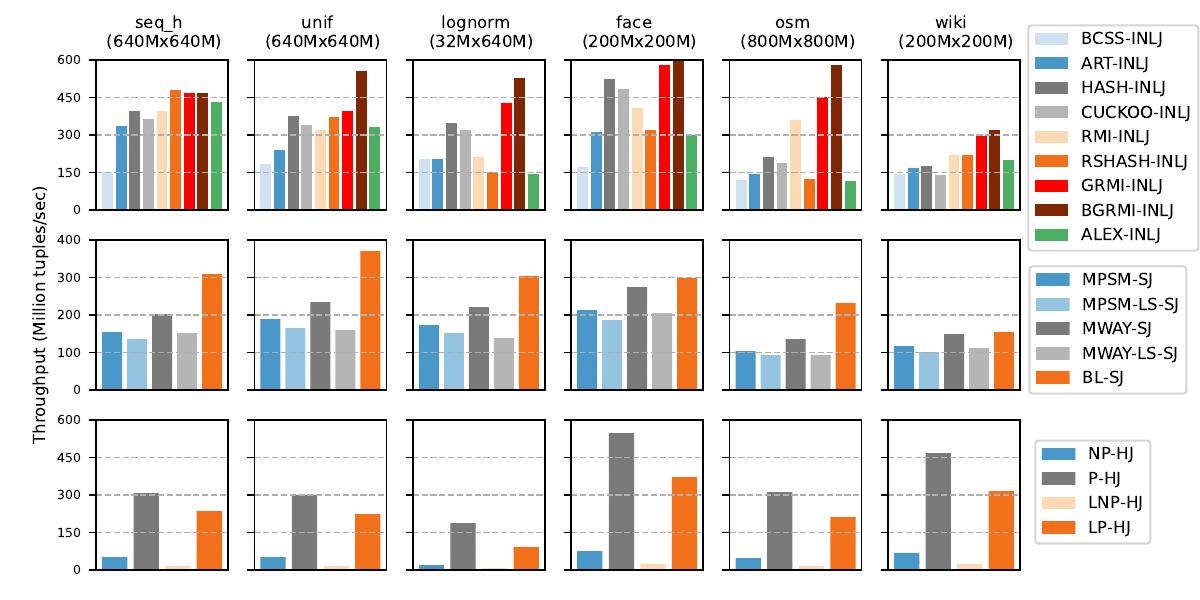}
     \end{center}
     \caption{Performance of the three join categories for real and synthetic datasets, where each row represents a category.}
     \label{fig:all_real_synthetic}
\end{figure*}
 
\noindent\textbf{Real and Synthetic Datasets.} Figure~\ref{fig:all_real_synthetic} shows the join throughput (in million tuples per sec) of the proposed learned joins against their traditional competitors in each join category. 

Clearly, {\blinlj} outperforms all other {\inlj} competitors in all real and synthetic datasets, except {\seqh}. This is attributed to the effect of building the gapped {\rmi} using "model-based" insertion and employing the buffering optimization during key lookups. Regardless the {\rmi} prediction accuracy for any dataset, each key is located very close to the position where it is expected to be found, and by using the exponential search, the key is typically found after 1 or 2 search steps (using binary search in {\rmiinlj} takes at least 4 search steps). {\blinlj} is better than {\linlj} in almost all cases with a throughput improvement between 5\% and 25\%. {\rshashinlj} shows a slightly better performance than {\blinlj} with the {\seqh} dataset due to the model over-fitting in the sequential cases (Check Section~\ref{sec:linlj_rmi_hash} for more details). In contrast, {\rshashinlj} has a very poor performance with {\lognorm} and {\osm} datasets. This is expected as these datasets are more skewed than others, and lead to building large sparse radix tables in {\rs}~\cite{KMR+20}. Therefore, the learned model becomes less effective in balancing keys that will be queried in the join operation. Similarly, {\alexinlj} suffers from the extra overhead of its B-tree-like structure. This matches the results reported by our online leaderboard~\cite{SOSDLeaderboard20} for benchmarking learned indexes: all updatable learned indexes are significantly slower. Unlike {\alex}, the proposed {\grmi} variants only injected {\alex}'s ideas of model-based insertion and exponential search into {\rmi} (without internal nodes). We also found two more interesting observations. First, {\hashinlj} is slightly better than {\cuckooinlj}. This is because, in {\cuckooinlj}, each key has at most 2 hash table lookups (in case of having collision), which incur 2 cache misses. In contrast, the checks occurring in bucket chaining (with bucket size of 4) are mostly done on keys in the same cache line (1 cache miss). Such observation confirms the reported findings in a recent hash benchmarking study~\cite{RAD15}. \textcolor{black}{Second, the proposed hierarchical buffering in Section~\ref{sec:linlj_optimized_buffer} improves the performance of {\inlj} with non-learned tree-structured indexes as well. In particular, {\bcsstreeinlj} outperforms {\csstreeinlj} in all datasets with at least 9\% (when using {\wiki} datasets) and up to 15\% (when using {\unif} datasets) higher join throughput. Since {\csstreeinlj} is always the worst baseline, we removed it from all the reported results to avoid cluttered figures.} 

For {\sj}, {\blsj} consistently provides the highest join throughput in all datasets due to its effective sorting and join phases. In most datasets, {\blsj} has at least 50\% and 30\% less latency in the sorting and joining phases, respectively, compared to the {\mpsmsj} and {\mwaysj} baselines (More breakdown results are shown in Figure~\ref{fig:all_breakdown}). \textcolor{black}{In addition, just replacing the sorting phase, in both {\mway} and {\mpsm}, with the LearnedSort~\cite{KVC+20} algorithm actually reduces the throughput. This is mainly because, when using the LearnedSort in {\sj} algorithms, its overhead becomes very significant to hide with the improvement in the sorting runtime. Assuming $W$ is the number of workers, the LearnedSort is applied $W^2$ times in {\mwaylssj} because each worker sorts $W$ local partitions before applying the multi-way merge step (i.e., sorting $W^2$ partitions in total from all workers). Similarly, in {\mpsmlssj}, the overhead still exists, yet lighter (LearnedSort is repeated $W$ times only). Applying the LearnedSort many times amplifies the overhead of allocating/de-allocating its internal data structures (e.g., over-allocated and spill buckets), and hence degrades the overall throughput. In contrast to {\mwaylssj} and {\mpsmlssj}, {\blsj} builds one CDF-based model for each input relation and reuses it through the partitioning, sorting and joining phases, which saves any redundant work and increases the join throughput.}


For {\hj}, as expected, the learned variants (built with 2\% samples) demonstrate very low throughputs especially for the real datasets as they are more challenging. They show at least 15X slowdown in most of the cases. Thus, we exclude the {\hj} results from all figures in the rest of this section. 

\begin{figure}
    \begin{center}
     \includegraphics[width=3.2in]{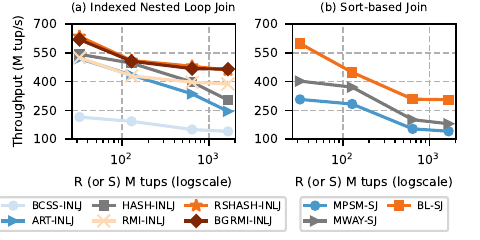}
     \end{center}
     \caption{Performance when changing number of tuples in the {\seqh} dataset.}
     \label{fig:all_dataset_sizes}
\end{figure}
\noindent\textbf{Dataset Sizes.} Figure~\ref{fig:all_dataset_sizes} shows the performance of different {\inlj} and {\sj} algorithms while scaling up the number of tuples to be joined. In this figure, we report the throughput for joining four variants of the {\seqh} dataset: 32Mx32M, 128Mx128M, 640Mx640M, and 1664Mx1664M (Note that the figure has a logscale on its x-axis). 

For {\inlj}, all learned variants are able to smoothly scale to larger dataset sizes, with a small performance degradation. In case of {\rmiinlj}, the "last-mile" search step will suffer from algorithmic slowdown only as it applies binary search, and such slowdown becomes even less in the {\blinlj} as it employs an exponential search instead. Moreover, both tree structures, {\arttreeinlj} and {\bcsstreeinlj}, become significantly ineffective at very large datasets because traversing the tree becomes more expensive compared to performing either binary or exponential search in the learned variants. For {\sj}, all algorithms keep the same relative performance gains regardless the sizes of input datasets. 

\begin{figure}
    \begin{center}
     \includegraphics[width=3.2in]{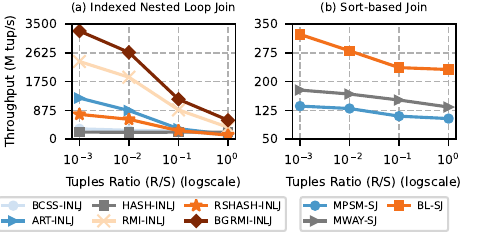}
     \end{center}
     \caption{\textcolor{black}{Performance when changing the datasets ratio (the {\osm} dataset)}.}
     \label{fig:all_datasets_ratio}
\end{figure}

\noindent\textcolor{black}{\textbf{Datasets Ratio.} Figure~\ref{fig:all_datasets_ratio} shows the performance of different {\inlj} and {\sj} algorithms while joining input datasets of different sizes. In this experiment, we fix the size of one input relation $S$, which is the {\osm} dataset with 800M tuples, while varying the number of tuples of the second input relation $R$ as a ratio from $S$. We generate four variants of the relation $R$ with the following number of tuples: 0.8M (0.1\% of $S$), 8M (1\% of $S$), 80M (10\% of $S$), and 800M (100\% of $S$). Note that the figure has a logscale on its x-axis. In case of the {\inlj} algorithms, we index relation $R$ to study the effect of changing the index size on the join performance.}

\textcolor{black}{Clearly, the throughputs of {\blinlj} and {\rmiinlj} at small datasets ratio, including 0.001 and 0.01, are significantly better than the throughputs of other algorithms, specially {\hashinlj} and {\bcsstreeinlj} (at least 2X better). The main reason for that is the extremely small size of learned indexes at these ratios (compared to the tree structures and hash tables), which can completely fit into the cache (i.e., almost no cache misses). In general, the performance gap between all {\inlj} techniques significantly decreases at larger dataset ratios, including 0.1 and 1, where the rate of cache misses during index lookups becomes higher. We can also observe that both learned and non-learned {\sj} algorithms follow a similar performance trend as in {\inlj} algorithms, yet, with lower throughputs. This is expected as {\sj} includes more overhead coming from the partitioning, sorting, and merging phases (i.e., not just lookups during the join operation as in {\inlj}).}

\begin{figure}
    \begin{center}
     \includegraphics[width=3.2in]{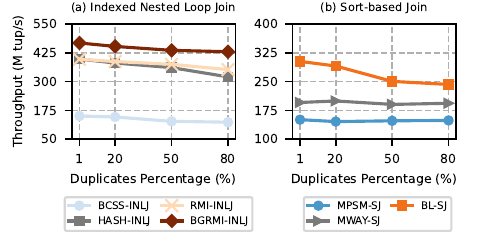}
     \end{center}
     \caption{Performance when changing number of duplicates in the {\seqh} dataset.}
     \label{fig:all_duplicates}
\end{figure}

\noindent\textbf{Duplicates.} Here, we study the performance of the {\inlj} and {\sj} algorithms while changing the ratio of duplicate keys in input datasets (i.e., covering 1-to-$m$ and $n$-to-$m$ joins). In this experiment, given a percentage of duplicates $x$\%, we replace $x$\% of the unique keys in each 640M {\seqh} dataset with repeated keys. Our {\lsj} naturally supports joining duplicates through the Chunked-Join operation (Section~\ref{sec:lsj_join_phase}). To support joining duplicates in {\blinlj}, we perform a range scan between the error boundaries provided by {\rmi} around the predicted location, similar to~\cite{MKR+20}.  

As shown in Figure~\ref{fig:all_duplicates}, the number of duplicates affects the join throughput of our learned algorithms as the built models will always predict the same position/partition to the key, which increases the number of collisions.
However, the performance of learned algorithms degrade gracefully and still remains better than non-learned algorithms. 
Note that the throughputs of {\bcsstreeinlj} and {\hashinlj} are decreased as well due to the increase in their tree depth and bucket chain, respectively, while {\mpsm} and {\mway} are almost not affected.

\begin{figure}
    \begin{center}
     \includegraphics[width=3.2in]{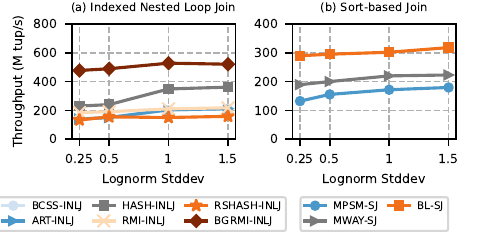}
     \end{center}
     \caption{\textcolor{black}{Performance when changing the skewness in the {\lognorm} dataset}.}
     \label{fig:all_skewness}
\end{figure}

\noindent\textcolor{black}{\textbf{Skewness.} In this experiment, we study the robustness of different {\inlj} and {\sj} algorithms while changing the skewness degree of input datasets. Figure~\ref{fig:all_skewness} shows the throughput for joining {\lognorm} datasets, where one input relation is fixed and the other one is varied according to the skewness degree (i.e., {$\sigma$}). As a fixed relation, we use the {\lognorm} dataset with 32M tuples that was generated using the default parameters mentioned in Section~\ref{sec:experimental_setup}. For the varied relation, we generate four new {\lognorm} datasets, each contains 640M tuples, yet, with different $\sigma$ values: 0.25, 0.5, 1 and 1.5. We index the varied relation when using the {\inlj} algorithms.}    

\textcolor{black}{As shown in the figure, all algorithms show pretty similar performance at $\sigma$ values of 1 and 1.5, except {\blsj} which has a bit better throughput. However, the difference in performance becomes significant at $\sigma$ values of 0.5 and 0.25 (i.e., extremely skewed datasets). In case of the {\inlj} algorithms, all learned variants are less susceptible to changing the skewness degree, except {\rshashinlj}. This is due to the ability of {\cdf}-based learned models to balance keys over the index array, and avoid generating extremely large clusters of keys. As expected, {\rshashinlj} has less join throughput because increasing the skewness degree (i.e., decreasing the $\sigma$ value) will increase the sparsity of the generated radix tables in {\rs}~\cite{KMR+20}, and hence leads to large clusters of keys. On the other side, the throughputs of {\bcsstreeinlj} and {\hashinlj} are significantly decreased by increasing the skewness degree due to the increase in their tree depth and bucket chains, respectively. In case of the {\sj} algorithms, {\mpsmsj} is the most affected algorithm by increasing the skewness degree because its range partitioning technique can easily result in unbalanced partitions among different worker threads (i.e., some threads will perform much work than others), which increases the overall algorithm latency. In contrast, {\blsj} and {\mwaysj} are almost not affected. This is because {\blsj} employs a {\cdf}-based learned model to perform the range partitioning, and hence the work is balanced among different threads. In addition, {\mwaysj} implements a specific strategy to handle extremely large tasks (i.e., avoiding heavy hitters), which decreases the effect of skewness~\cite{BAT+13}.}

\begin{figure}
    \begin{center}
     \includegraphics[width=3.2in, height=1.6in]{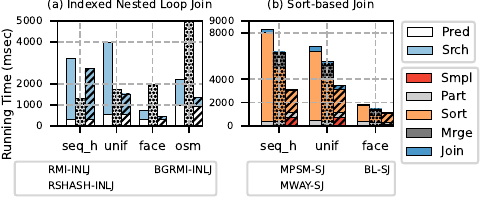}
     \end{center}
     \caption{Runtime breakdown for the phases of learned {\inlj} and {\sj} algorithms.}
     \label{fig:all_breakdown}
\end{figure}

\noindent\textbf{Performance Breakdown.} Figure~\ref{fig:all_breakdown} shows the runtime breakdown (in msec) for the different {\inlj} and {\sj} algorithms when joining different datasets. For synthetic datasets ({\seqh} and {\unif}), we show the results of joining the 640Mx640M variant.
For the learned {\inlj} case, we breakdown the runtime into {\rmi} prediction (\textit{Pred}) and final "last-mile" search (\textit{Srch}) phases. Since {\rshashinlj} does not apply a final search step (it just uses the {\rs} prediction as a hash value), we consider its total time as prediction (i.e., "0" search phase). For all datasets, except {\seqh}, the search phase of {\blinlj} is at least 3 times faster than typical {\rmiinlj}, while the prediction time is almost the same.   

For the {\sj} case, we breakdown the runtime into sampling (\textit{Smpl}), partitioning (\textit{Part}), sorting (\textit{Sort}), merging (\textit{Mrge}), and finally, merge-joining (\textit{Join}). Clearly, the sorting phase is dominating in each algorithm, and improving it will significantly boost the {\sj} performance. On average, {\blsj} has a 3X and 2X faster sorting phase compared to {\mpsm} and {\mway}, respectively. Also, the {\blsj} joining phase is at least 1.7X faster than its counterparts in {\mpsm} and {\mway}.

\begin{figure}
    \begin{center}
     \includegraphics[width=3.2in]{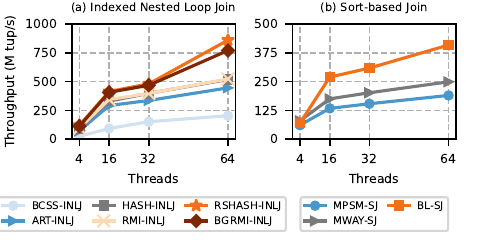}
     \end{center}
     \caption{Performance when changing number of threads.}
     \label{fig:all_parallelism}
\end{figure}
\noindent\textbf{Parallelism.} Figure~\ref{fig:all_parallelism} shows the performance of different {\inlj} and {\sj} algorithms while scaling up the number of threads from 4~to~64. In this experiment, we report the throughput for joining the 640Mx640M variant of the {\seqh} dataset. 

Overall, all {\inlj} and {\sj} algorithms scale well when increasing the number of threads, although only the learned join variants of {\inlj} achieve a slightly higher throughput than others. The main explanation for this good performance is that learned {\inlj} variants usually incur less cache misses (Check the performance counters results in Figure~\ref{fig:all_performance_counters}), because they use learned models that have small sizes and can nicely fit in the cache. Since threads will be latency bound waiting for access to RAM, then threading in typical {\inlj} algorithms will be more affected with latency than learned variants, and degrades its performance. This observation is confirmed by a recent benchmarking study for learned indexes~\cite{MKR+20}.



\begin{figure*}
     \begin{center}
     \includegraphics[width=6.8in]{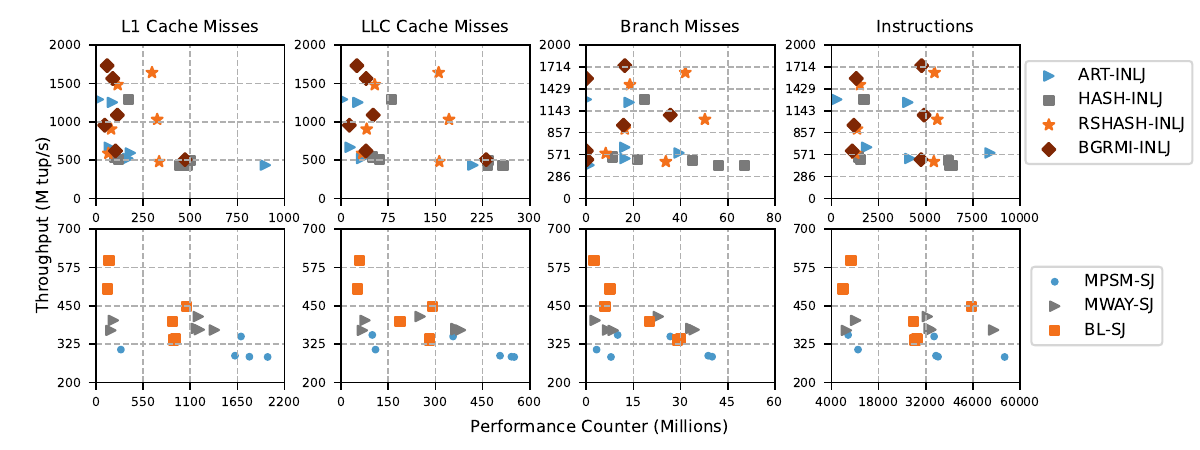}
     \end{center}
     \caption{Low-level performance counters for exploratory analysis, where each row represents a category of join. 
     }
     \label{fig:all_performance_counters}
\end{figure*}

\noindent\textbf{Other Performance Counters.}
To better understand the behavior of the proposed learned join algorithms, we dig deep into their low-level performance counters, and try to find any correlation between these counters and the join throughput. Figure~\ref{fig:all_performance_counters} presents the plotting of four performance counters (in millions) for different {\inlj} and {\sj} algorithms, where each column represents a counter type and each row represents a join category. 
For {\inlj}, we can see that most of the {\blinlj} cases with high join throughputs occur at low cache misses, whether L1 or LLC. This is explainable as most of such misses happen in the "last-mile" search phase (a typical 2-levels {\rmi} will have at most 2 cache misses in its prediction phase, hopefully only 1 miss if the root model is small enough), and hence improving it by reducing the number of search steps will definitely reduce the total cache misses. For other counters, it seems that there is no obvious correlation between their values and the join throughput. 

In contrast, for {\sj}, there is a clear correlation between the {\blsj}'s counter values and the corresponding join throughput. The {\blsj} throughput increases by decreasing the number of cache and branch misses. This is mainly because the AVX sort is applied on small partitions that have better caching properties. In addition, the number of comparisons in the AVX sort itself becomes lower (Check the sorting phase complexity in Section~\ref{sec:lsj_sort_phase}), which in turn, reduces the number of branch misses as well.

\subsection{{\inlj}-Specific Evaluation Results}
\label{sec:evaluation_results_inlj_only}




\begin{figure}
     \begin{center}
     \includegraphics[width=3.2in]{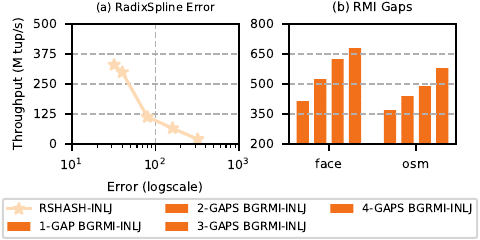}
    \end{center}
    \caption{Effect of changing {\rs} error and {\rmi} gap settings on the performance of {\rshashinlj} and {\blinlj}, respectively.}
     \label{fig:inlj_rs_error_rmi_gaps}
\end{figure}

\noindent\textbf{{\rs} Error.} Tuning the user-defined error value is an essential step for boosting the {\rs} performance~\cite{KMR+20}. Figure~\ref{fig:inlj_rs_error_rmi_gaps}(a) shows how the throughput of {\rshashinlj} is affected while changing the {\rs} error value (Note that the figure has a logscale on its x-axis). Here, we self-join the {\wiki} dataset. In general, increasing the error value will lead to more relaxed models with higher misprediction rate, which in turn build a hash table with higher collision rate. 

\noindent\textbf{{\rmi} Gaps.} Figure~\ref{fig:inlj_rs_error_rmi_gaps}(b) shows the effect of increasing the number of gaps assigned per key in {\blinlj}, while self-joining each of the {\fb} and {\osm} datasets. As expected, increasing the gaps degree allows for more space around each key to handle mispredictions efficiently, which in turn reduces the final search steps and increases the throughput. However, having more gaps will come with higher space complexity as well. Thus, this parameter needs to be tuned carefully.

\begin{figure*}
     \begin{center}
     \includegraphics[width=6.8in]{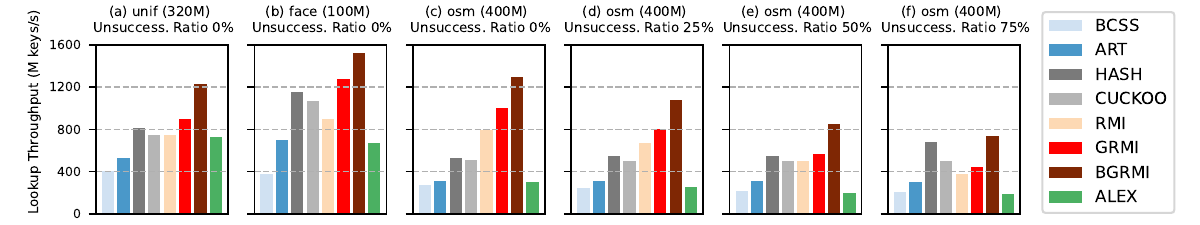}
     \end{center}
     \caption{\textcolor{black}{Performance of the {\grmi} and {\bgrmi} indexes compared to others, while changing the lookup workload sizes and unsuccessful ratios}.}
     \label{fig:all_real_synthetic_lookup_only}
\end{figure*}
\noindent\textcolor{black}{\textbf{Index Lookup.} In this experiment, we focus on studying the effectiveness of both {\grmi} and {\bgrmi} (compared to existing solutions) as general lookup indexes. Figure~\ref{fig:all_real_synthetic_lookup_only} shows the index lookup throughput in million keys per sec (y-axis) while changing the lookup workload. In the left three subfigures, we build indexes for the whole keys in the {\unif}, {\fb} and {\osm} datasets. Then, we generate a lookup workload from each dataset by randomly selecting, with replacement, 50\% of the indexed keys (i.e., all lookup keys already exist in the index, i.e., unsuccessful lookup ratio of 0\%). In the right three subfigures, we reuse the built {\osm} indexes before, yet, we change the {\osm} lookup workload by randomly replacing 25\%, 50\% and 75\% of its keys with non-existing ones to have three more {\osm} lookup workloads with unsuccessful ratios of 25\%, 50\% and 75\%, respectively. We build all indexes with the same default settings mentioned in Section~\ref{sec:experimental_setup}. Interestingly, the performance gap between the traditional {\rmi} and our proposed variants, i.e., {\grmi} and {\bgrmi}, decreases by increasing the the ratio of unsuccessful lookups ({\grmi} becomes even worse than {\rmi} at unsuccessful lookup ratio of 75\%). This is because both {\grmi} and {\bgrmi} perform the "last-mile" search using the exponential search (instead of the binary search used in {\rmi}), which is not the best solution when the search key is unlikely to be among the first neighbours around the model prediction. That being said, {\bgrmi} still has the best lookup performance in all cases as it combines both the model-based insertion idea and the hierarchical buffering optimization. We also observe that the advantage of HASH (i.e., bucket chaining) over CUCKOO becomes more significant at higher unsuccessful lookup ratios as well. This is due to the same reason mentioned before in Figure~\ref{fig:all_real_synthetic}: using CUCKOO (with 2 hash tables) guarantees 2 cache misses per each unsuccessful lookup, yet, HASH will probably do the required checks within the same cacheline (i.e., 1 cache miss). Finally, although BCSS is equipped with the hierarchical buffering optimization, its performance is still greatly penalized by the need to traverse the increased tree depths, specially in large datasets.}




\subsection{{\sj}-Specific Evaluation Results}
\label{sec:evaluation_results_sj_only}

\begin{figure}
     \begin{center}
     \includegraphics[width=3.2in]{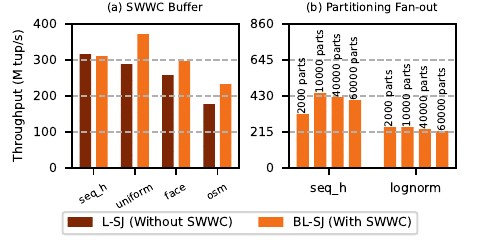}
     \end{center}
     \caption{Effect of changing {\swwc} buffer and partitioning fan-out settings on the performance of {\lsj}.}
     \label{fig:sj_swwc_partitioning_fanout}
\end{figure}

\noindent\textbf{{\swwc} Buffer.}  Figure~\ref{fig:sj_swwc_partitioning_fanout}(a) shows the effect of enabling the {\swwc} buffering on the performance of {\lsj} while joining different datasets. For synthetic datasets ({\seqh} and {\unif}), we show the results of joining the 640Mx640M variant. We can see that most datasets benefit from enabling buffering in the partitioning phase. However, such benefit is capped by the contribution of partitioning time to the overall {\sj} time (typically, the partitioning phase ranges from 20\% to 35\% of the total time). For these datasets, except {\seqh}, the throughput improvement ranges from 10\% to 30\%.

\noindent\textbf{Partitioning Fan-out.} Figure~\ref{fig:sj_swwc_partitioning_fanout}(b) shows the effect of choosing different fan-out values (2000, 10000, 40000, and 60000) on the throughput of {\blsj}, while joining the 128Mx128M variant of {\seqh} and {\lognorm} datasets. The results show that there is a sweet spot where the fan-out value hits the highest join throughput. This is expected because having a small fan-out will keep the number of comparisons in AVX sort large. In contrast, having a large fan-out will increase the TLB pressure when accessing the output partitions.

\begin{figure}
     \begin{center}
     \includegraphics[width=3.2in]{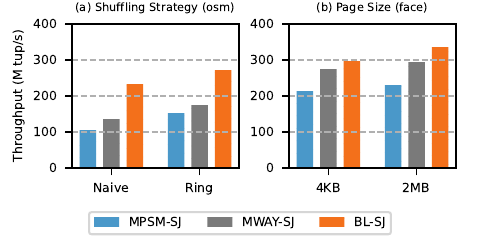}
     \end{center}
     \caption{\textcolor{black}{Effect of changing data shuffling strategy (NUMA) and page sizes (TLB entries) on the performance of the {\sj} algorithms.}}
     \label{fig:sj_numa_page_size}
\end{figure}

\noindent\textcolor{black}{\textbf{Data Shuffling.} In this experiment, we study how equipping the {\sj} algorithms with NUMA awareness can affect the join throughput. A recent study~\cite{LPM+13} showed that the implementation details of data shuffling across different NUMA nodes can substantially impact the interconnect bandwidth, and hence the overall performance. Since data shuffling is an essential operation in any {\sj} algorithm, optimizing the shuffling strategy is expected to improve the join throughput. Basically, both {\blsj} and {\mpsmsj} perform data shuffling during their partitioning and joining phases, while the shuffling in {\mwaysj} occurs in its partitioning and multi-way merge phases. Here, we experiment with two data shuffling strategies: (1)~\textit{Naive} shuffling (the default strategy in all algorithms), which does not consider the NUMA characteristics and lets each thread pull/write  data generated by all other threads without any optimization, and (2)~\textit{Ring} shuffling, which executes the shuffling operation on multiple steps, as described in~\cite{LPM+13}, to reduce the NUMA interconnect contention. Figure~\ref{fig:sj_numa_page_size}(a) shows how the join throughputs of different {\sj} algorithms change under these two shuffling strategies while using the {\osm} dataset. As expected, all algorithms benefit from the "Ring" shuffling strategy, where the improvement ratios of join throughput are 17\%, 30\% and 45\% for {\blsj}, {\mwaysj} and {\mpsmsj}, respectively. As observed, {\blsj} has the least improvement ratio. Generally speaking, the effect of using the "Ring" strategy becomes significant if it is used in a time-consuming phase, and vice versa. Therefore, since the partitioning and joining phases of {\blsj} do not represent more than 15\% of the total algorithm time, the {\blsj}'s throughput improvement due to using the "Ring" strategy becomes limited compared to {\mpsmsj} and {\mwaysj}.}

\noindent\textcolor{black}{\textbf{Page Size.} Figure~\ref{fig:sj_numa_page_size}(b) shows the effect of changing the page size from 4KB (the default value in Unix operating systems) to a larger size 2MB (a.k.a huge page) while using the {\fb} dataset. Compared to the {\inlj} algorithms, {\sj} algorithms perform larger amount of random memory accesses (specially when having big working sets, i.e., large datasets), which can be limited by the TLB misses. We can reduce such misses by utilizing larger page sizes as suggested in previous studies (e.g.,~\cite{SCD16}). As shown in the figure, all {\sj} algorithms benefit from increasing the page size. Specifically, the throughputs of {\blsj}, {\mwaysj} and {\mpsmsj} improved by 13\%, 7\%, 8\%, respectively. We can observe that the improvement ratio in {\blsj} is higher than in other algorithms. This is because the overhead of the partitioning phase in {\blsj} is a bit higher (check Section~\ref{sec:lsj_partition_phase}), and hence reducing the TLB misses will have a greater impact. We skipped reporting the results using larger page sizes (e.g., 1GB), because there were no throughput improvement as the working set already fits in the TLB when using 2MB pages.}




\section{Discussion}
\label{sec:discussion}
In this section, we discuss some performance aspects and limitations related to the learned in-memory joins.

\noindent\textbf{Duplicates.} Having duplicate keys in the input datasets results in building imprecise {\cdf}-based models that will lead to predicting the same position/partition for multiple keys. This increases the number of collisions, and hence reduces the join throughput (e.g., in {\inlj}, collisions increase the "last-mile" search overhead, while in {\lsj}, they result in imbalanced partitions that increase the sorting and joining latency). Although our experimental results (Figure~\ref{fig:all_duplicates}) show the competitiveness of learned {\inlj} and {\sj} variants when having a high number of duplicates, their performance gain over non-learned variants becomes insignificant at extreme duplicate ratios (e.g., 80\% or more). As a future work, we plan to employ efficient ideas from~\cite{KVK21} to further reduce the effect of duplicates.

\noindent\textbf{Complex Distributions.} In general, learned joins are favorable when inputs datasets have skewed and/or complex distributions (e.g., {\lognorm}, {\osm} and {\wiki} datasets). That being said, if the datasets never preserve local structure (e.g., high randomization and spikes), then the distributions of such datasets are difficult to model. In these cases, learned join variants require a significant large number of models (i.e., higher storage overhead and cache misses) to achieve errors comparable to those observed on the easy-to-learn datasets.

\noindent\textbf{Vectorization.} {\simd} has been heavily studied in the data management field to accelerate different operations, including scanning~\cite{LMF+16, DKF+18}, bloom filters~\cite{PR14_2}, non-learned joins~\cite{FZW19, FHC+19, LPK+20} and sorting~\cite{CNL+08, IT15}. A recent study~\cite{SVH+21} showed that, with the help of vectorization and prefetching-optimized inter-task parallelism (e.g., {\amac}~\cite{KFG15}), the prediction time of learned models can be further improved for indexing applications. We expect that employing the vectorization in learned joins will be beneficial as well, and investigating it is an interesting future research direction.

\section{Related Work}
\label{sec:relatedwork}

\noindent \textbf{Nested Loop Joins (NLJ).} Due to their simplicity and good performance in many cases, nested loop joins are widely used in database engines for decades~\cite{HSH07}. They have been extensively studied against other join algorithms either on their own (e.g.,~\cite{CZ14}) or within the query optimization context (e.g.,~\cite{LGM+15}). Block and indexed nested loop joins are the most popular variants~\cite{HL06}. Several works revisited their implementations to improve the performance for hierarchical caching (e.g.,~\cite{HL06}) and modern hardware (e.g., GPU~\cite{HYF+08}).

\noindent \textbf{Sort-based Joins ({\sj}).} Numerous algorithms have been proposed to improve the performance of sort-based joins over years. However, we here highlight the parallel approaches as they are the most influential ones~\cite{KKL+09, AKN12, BAT+13, SCD16, BMS+17}.~\cite{KKL+09} provided the first evaluation study for sort-based join in a multi-core setting, and developed an analytical model for the join performance that predicted the cases where sort-merge join would be more efficient than hash joins. 
After that, {\mpsm}~\cite{AKN12} was proposed to significantly improve the performance of sort-based join on {\numa}-aware systems.
Multi-Way sort-merge join (we refer to it as {\mway}) is another state-of-the-art algorithm~\cite{BAT+13} that improved over~\cite{KKL+09} by using wider {\simd} instructions and range partitioning to provide efficient multi-threading without heavy synchronization. 

\noindent \textbf{Hash-based Joins ({\hj}).} In general, hash-based joins are categorized into non-partitioned (e.g.,~\cite{TAB+13, BLP11, LLA+15}) and partitioned (e.g.,~\cite{TAB+13, SCD16, BGN21}) hash joins. Radix join~\cite{BMK99, MBK00, MBK02, TAB+13, BTA+15} is the most widely-used partitioned hash join due to its high throughput and cache awareness (e.g., avoiding TLB misses).~\cite{KKL+09, BLP11} provided first parallel designs for radix hash joins, while~\cite{TAB+13} and~\cite{BGN21} added software write-combine buffers ({\swwc}) and bloom filter optimizations, respectively, to these implementations. Other works optimized the performance of hash join algorithms for {\numa}-aware systems (e.g.,~\cite{LLA+15, SCD16}), hierarchical caching (e.g.,~\cite{HL07}), and special hardware (e.g.,GPU~\cite{HYF+08}, FPGA~\cite{CCB+20}, NVM~\cite{STC+20}).





\noindent \textbf{Machine Learning for Database.} During the last few years, machine learning started to have a profound impact on automating the core database functionality and design decisions. Along this line of research, a corpus of works studied the idea of replacing traditional indexes with learned models that predict the location of a key in a dataset including single-dimension (e.g.,~\cite{KBC+18, KMR+20, FV20}), multi-dimensional (e.g.,~\cite{NDA+20, DNA+20}), updatable (e.g.,~\cite{DMY+20}), and spatial (e.g.,~\cite{LLZ+20, QLJ+20, PRK+20}) indexes.
Query optimization is also another area that has several works on using machine learning to either entirely build an optimizer from scratch (e.g.,~\cite{MNM+19}), or provide an advisor that improves the performance of existing optimizers (e.g.,~\cite{MNM+21, SLM+01, VZZ+00}). Moreover, there exists several workload-specific optimizations, enabled by machine learning, in other database operations e.g., cardinality estimation~\cite{KKR+19}, data partitioning~\cite{YCW+20}, sorting~\cite{KVC+20}, and scheduling~\cite{MSV+19}. However, to the best of our knowledge, there is no existing work on using the machine learning to improve in-memory joins. Our proposed work in this paper fills this gap. 


\section{Conclusion}
\label{sec:conclusion}

In this paper, we discussed and experimentally investigated the usage of {\cdf}-based partitioning and learned indexes (e.g., {\rmi} and {\rs}) in the three join categories; indexed nested loop join ({\inlj}), sort-based joins ({\sj}) and hash-based joins ({\hj}). We found that learned models can not improve the performance of {\hj}, yet, they are beneficial for {\inlj} and {\sj}. However, directly integrating learning models with {\inlj} and {\sj} leads to a sub-par performance. Therefore, we proposed optimized {\inlj} and {\sj} variants that properly exploit these learned models. In {\inlj} category, we introduced learned {\inlj}s that employ efficient \textit{model-based insertion} and \textit{cache-optimized buffering} techniques. In {\sj} category, we proposed a new sort-based join, namely {\lsj}, that exploits a {\cdf}-based model to improve the different {\sj} phases (e.g., sort and join) by just scaling up its prediction degree. Our experimental evaluation showed that learned {\inlj} can be 2.7X faster than hash-based {\inlj} in many scenarios and with different datasets. In addition, the proposed {\lsj} can be 2X faster than Multi-way sort-merge join~\cite{BAT+13}, which is the best sort-based join baseline in the literature.


\newpage

\bibliographystyle{plain} 
\bibliography{ml_databases}

\begin{thebibliography}{10}

\bibitem{AAD+96}
Sameet Agarwal, Rakesh Agrawal, Prasad~M. Deshpande, Ashish Gupta, Jeffrey~F.
  Naughton, Raghu Ramakrishnan, and Sunita Sarawagi.
\newblock {On the Computation of Multidimensional Aggregate}.
\newblock In {\em Proceedings of the International Conference on Very Large
  Data Bases, VLDB}, 1996.

\bibitem{AKN12}
Martina-Cezara Albutiu, Alfons Kemper, and Thomas Neumann.
\newblock {Massively Parallel Sort-Merge Joins in Main Memory Multi-Core
  Database Systems}.
\newblock In {\em Proceedings of the International Conference on Very Large
  Data Bases, VLDB}, 2012.

\bibitem{ALEX20}
{ALEX: An Updatable Adaptive Learned Index}.
\newblock \url{https://github.com/microsoft/ALEX}.

\bibitem{BAT+13}
Cagri Balkesen, Gustavo Alonso, Jens Teubner, and M.~Tamer \"{O}zsu.
\newblock {Multi-Core, Main-Memory Joins: Sort vs. Hash Revisited}.
\newblock In {\em Proceedings of the International Conference on Very Large
  Data Bases, VLDB}, 2013.

\bibitem{BTA+15}
Cagri Balkesen, Jens Teubner, Gustavo Alonso, and M.~Tamer Ozsu.
\newblock {Main-Memory Hash Joins on Modern Processor Architectures}.
\newblock {\em IEEE Transactions on Knowledge and Data Engineering, TKDE},
  27(7):1754--1766, 2015.

\bibitem{BGN21}
Maximilian Bandle, Jana Giceva, and Thomas Neumann.
\newblock {To Partition, or not to Partition, That is the Join Question in a
  Real System}.
\newblock In {\em Proceedings of the ACM International Conference on Management
  of Data, SIGMOD}, 2021.

\bibitem{BMS+17}
Claude Barthels, Ingo M\"{u}ller, Timo Schneider, Gustavo Alonso, and Torsten
  Hoefler.
\newblock {Distributed Join Algorithms on Thousands of Cores}.
\newblock In {\em Proceedings of the International Conference on Very Large
  Data Bases, VLDB}, 2017.

\bibitem{BLP11}
Spyros Blanas, Yinan Li, and Jignesh~M. Patel.
\newblock {Design and Evaluation of Main Memory Hash Join Algorithms for
  Multi-Core CPUs}.
\newblock In {\em Proceedings of the ACM International Conference on Management
  of Data, SIGMOD}, 2011.

\bibitem{BMK99}
Peter~A. Boncz, Stefan Manegold, and Martin~L. Kersten.
\newblock {Database Architecture Optimized for the New Bottleneck: Memory
  Access}.
\newblock In {\em Proceedings of the International Conference on Very Large
  Data Bases, VLDB}, 1999.

\bibitem{CS94}
Surajit Chaudhuri and Kyuseok Shim.
\newblock {Including Group-By in Query Optimization}.
\newblock In {\em Proceedings of the International Conference on Very Large
  Data Bases, VLDB}, 1994.

\bibitem{CZ14}
Mingxian Chen and Zhi Zhong.
\newblock {Block Nested Join and Sort Merge Join Algorithms: An Empirical
  Evaluation}.
\newblock In {\em {Advanced Data Mining and Applications}}, 2014.

\bibitem{CCB+20}
Xinyu Chen, Yao Chen, Ronak Bajaj, Jiong He, Bingsheng He, Weng-Fai Wong, and
  Deming Chen.
\newblock {Is FPGA Useful for Hash Joins? Exploring Hash Joins on Coupled
  CPU-FPGA Architecture}.
\newblock In {\em Proceedings of the International Conference on Innovative
  Data Systems Research, CIDR}, 2020.

\bibitem{CNL+08}
Jatin Chhugani, Anthony~D. Nguyen, Victor~W. Lee, et~al.
\newblock {Efficient Implementation of Sorting on Multi-Core SIMD CPU
  Architecture}.
\newblock In {\em Proceedings of the International Conference on Very Large
  Data Bases, VLDB}, 2008.

\bibitem{DNA+20}
Jialin Ding et~al.
\newblock {Tsunami: A Learned Multi-Dimensional Index for Correlated Data and
  Skewed Workloads}.
\newblock In {\em Proceedings of the International Conference on Very Large
  Data Bases, VLDB}, 2020.

\bibitem{DMY+20}
Jialin Ding, Umar~Farooq Minhas, Jia Yu, Chi Wang, Jaeyoung Do, Yinan Li,
  Hantian Zhang, Badrish Chandramouli, Johannes Gehrke, Donald Kossmann, David
  Lomet, and Tim Kraska.
\newblock {ALEX: An Updatable Adaptive Learned Index}.
\newblock In {\em Proceedings of the ACM International Conference on Management
  of Data, SIGMOD}, 2020.

\bibitem{DKF+18}
Markus Dreseler, Jan Kossmann, Johannes Frohnhofen, et~al.
\newblock {Fused Table Scans: Combining AVX-512 and JIT to Double the
  Performance of Multi-Predicate Scans}.
\newblock In {\em IEEE ICDE Workshops}, 2018.

\bibitem{FHC+19}
Zhuhe Fang, Zeyu He, Jiajia Chu, and Chuliang Weng.
\newblock {SIMD Accelerates the Probe Phase of Star Joins in Main Memory
  Databases}.
\newblock In {\em Database Systems for Advanced Applications}, 2019.

\bibitem{FZW19}
Zhuhe Fang, Beilei Zheng, and Chuliang Weng.
\newblock {Interleaved Multi-Vectorizing}.
\newblock In {\em Proceedings of the International Conference on Very Large
  Data Bases, VLDB}, 2019.

\bibitem{FV20}
Paolo Ferragina and Giorgio Vinciguerra.
\newblock {The PGM-Index: A Fully-Dynamic Compressed Learned Index with
  Provable Worst-Case Bounds}.
\newblock In {\em Proceedings of the International Conference on Very Large
  Data Bases, VLDB}, 2020.

\bibitem{HL06}
Bingsheng He and Qiong Luo.
\newblock {Cache-Oblivious Nested-Loop Joins}.
\newblock In {\em Proceedings of the International Conference on Information
  and Knowledge Management, CIKM}, 2006.

\bibitem{HL07}
Bingsheng He and Qiong Luo.
\newblock {Cache-Oblivious Query Processing}.
\newblock In {\em Proceedings of the International Conference on Innovative
  Data Systems Research, CIDR}, 2007.

\bibitem{HYF+08}
Bingsheng He, Ke~Yang, Rui Fang, Mian Lu, Naga Govindaraju, Qiong Luo, and
  Pedro Sander.
\newblock {Relational Joins on Graphics Processors}.
\newblock In {\em Proceedings of the ACM International Conference on Management
  of Data, SIGMOD}, 2008.

\bibitem{HSH07}
Joseph~M. Hellerstein, Michael Stonebraker, and James Hamilton.
\newblock {Architecture of a Database System}.
\newblock {\em Found. Trends Databases}, 1(2):141–259, 2007.

\bibitem{IT15}
Hiroshi Inoue and Kenjiro Taura.
\newblock {SIMD- and Cache-Friendly Algorithm for Sorting an Array of
  Structures}.
\newblock In {\em Proceedings of the International Conference on Very Large
  Data Bases, VLDB}, 2015.

\bibitem{KKL+09}
Changkyu Kim, Tim Kaldewey, Victor~W. Lee, Eric Sedlar, Anthony~D. Nguyen,
  Nadathur Satish, Jatin Chhugani, Andrea~Di Blas, and Pradeep Dubey.
\newblock {Sort vs. Hash Revisited: Fast Join Implementation on Modern
  Multi-Core CPUs}.
\newblock In {\em Proceedings of the International Conference on Very Large
  Data Bases, VLDB}, 2009.

\bibitem{KKR+19}
Andreas Kipf, Thomas Kipf, Bernhard Radke, Viktor Leis, Peter~A. Boncz, and
  Alfons Kemper.
\newblock {Learned Cardinalities: Estimating Correlated Joins with Deep
  Learning}.
\newblock In {\em Proceedings of the International Conference on Innovative
  Data Systems Research, CIDR}, 2019.

\bibitem{KMR+20}
Andreas Kipf, Ryan Marcus, Alexander van Renen, Mihail Stoian, Alfons Kemper,
  Tim Kraska, and Thomas Neumann.
\newblock {RadixSpline: A Single-Pass Learned Index}.
\newblock In {\em Proceedings of aiDM@SIGMOD}, 2020.

\bibitem{KFG15}
Onur Kocberber et~al.
\newblock {Asynchronous Memory Access Chaining}.
\newblock In {\em Proceedings of the International Conference on Very Large
  Data Bases, VLDB}, 2015.

\bibitem{Kra21}
Tim Kraska.
\newblock {Towards instance-optimized data systems}.
\newblock In {\em Proceedings of the International Conference on Very Large
  Data Bases, VLDB}, 2021.

\bibitem{KAB+19}
Tim Kraska, Mohammad Alizadeh, Alex Beutel, Ed~H. Chi, Ani Kristo, Guillaume
  Leclerc, Samuel Madden, Hongzi Mao, and Vikram Nathan.
\newblock {SageDB: A Learned Database System}.
\newblock In {\em Proceedings of the International Conference on Innovative
  Data Systems Research, CIDR}, 2019.

\bibitem{KBC+18}
Tim Kraska, Alex Beutel, Ed~H. Chi, Jeffrey Dean, and Neoklis Polyzotis.
\newblock {The Case for Learned Index Structures}.
\newblock In {\em Proceedings of the ACM International Conference on Management
  of Data, SIGMOD}, page 489–504, 2018.

\bibitem{KVC+20}
Ani Kristo, Kapil Vaidya, Ugur \c{C}etintemel, Sanchit Misra, and Tim Kraska.
\newblock {The Case for a Learned Sorting Algorithm}.
\newblock In {\em Proceedings of the ACM International Conference on Management
  of Data, SIGMOD}, 2020.

\bibitem{KVK21}
Ani Kristo, Kapil Vaidya, and Tim Kraska.
\newblock {Defeating Duplicates: A Re-design of the LearnedSort Algorithm}.
\newblock In {\em Proceedings of the AIDB Workshop @VLDB}, 2021.

\bibitem{LLA+15}
Harald Lang, Viktor Leis, Martina-Cezara Albutiu, Thomas Neumann, and Alfons
  Kemper.
\newblock {Massively Parallel NUMA-Aware Hash Joins}.
\newblock In {\em In-Memory Data Management and Analysis, IMDM}, 2015.

\bibitem{LMF+16}
Harald Lang, Tobias M\"{u}hlbauer, Florian Funke, Peter~A. Boncz, Thomas
  Neumann, and Alfons Kemper.
\newblock {Data Blocks: Hybrid OLTP and OLAP on Compressed Storage Using Both
  Vectorization and Compilation}.
\newblock In {\em Proceedings of the ACM International Conference on Management
  of Data, SIGMOD}, 2016.

\bibitem{LPK+20}
Harald Lang, Linnea Passing, Andreas Kipf, Peter Boncz, et~al.
\newblock {Make the most out of your SIMD investments: Counter Control Flow
  Divergence in Compiled Query pipelines}.
\newblock {\em The VLDB Journal}, 29:226–238, 2020.

\bibitem{LGM+15}
Viktor Leis, Andrey Gubichev, Atanas Mirchev, Peter Boncz, Alfons Kemper, and
  Thomas Neumann.
\newblock {How Good Are Query Optimizers, Really?}
\newblock In {\em Proceedings of the International Conference on Very Large
  Data Bases, VLDB}, 2015.

\bibitem{LKN13}
Viktor Leis, Alfons Kemper, and Thomas Neumann.
\newblock {The Adaptive Radix Tree: ARTful Indexing for Main-Memory Databases}.
\newblock In {\em Proceedings of the International Conference on Data
  Engineering, ICDE}, 2013.

\bibitem{LLZ+20}
Pengfei Li, Hua Lu, Qian Zheng, Long Yang, and Gang Pan.
\newblock {LISA: A Learned Index Structure for Spatial Data}.
\newblock In {\em Proceedings of the ACM International Conference on Management
  of Data, SIGMOD}, 2020.

\bibitem{LPM+13}
Yinan Li, Ippokratis Pandis, Rene Mueller, Vijayshankar Raman, and Guy Lohman.
\newblock {NUMA-aware Algorithms: The Case of Data Shuffling}.
\newblock In {\em Proceedings of the International Conference on Innovative
  Data Systems Research, CIDR}, 2013.

\bibitem{MBK00}
Stefan Manegold, Peter~A. Boncz, and Martin~L. Kersten.
\newblock {What Happens During a Join? Dissecting CPU and Memory Optimization
  Effects}.
\newblock In {\em Proceedings of the International Conference on Very Large
  Data Bases, VLDB}, 2000.

\bibitem{MBK02}
Stefan Manegold, Peter~A. Boncz, and Martin~L. Kersten.
\newblock {Optimizing Main-memory Join on Modern Hardware}.
\newblock {\em IEEE Transactions on Knowledge and Data Engineering, TKDE},
  14(4):709--730, 2002.

\bibitem{MSV+19}
Hongzi Mao, Malte Schwarzkopf, Shaileshh~Bojja Venkatakrishnan, Zili Meng, and
  Mohammad Alizadeh.
\newblock {Learning Scheduling Algorithms for Data Processing Clusters}.
\newblock In {\em Proceedings of the ACM Special Interest Group on Data
  Communication, SIGCOMM}, pages 270 -- 288, 2019.

\bibitem{MKR+20}
Ryan Marcus, Andreas Kipf, Alexander van Renen, Mihail Stoian, Sanchit Misra,
  Alfons Kemper, Thomas Neumann, and Tim Kraska.
\newblock {Benchmarking Learned Indexes}.
\newblock In {\em Proceedings of the International Conference on Very Large
  Data Bases, VLDB}, 2020.

\bibitem{MNM+21}
Ryan Marcus, Parimarjan Negi, Hongzi Mao, Nesime Tatbul, Mohammad Alizadeh, and
  Tim Kraska.
\newblock {Bao: Making Learned Query Optimization Practical}.
\newblock In {\em Proceedings of the ACM International Conference on Management
  of Data, SIGMOD}, 2021.

\bibitem{MNM+19}
Ryan Marcus, Parimarjan Negi, Hongzi Mao, Chi Zhang, Mohammad Alizadeh, Tim
  Kraska, Olga Papaemmanouil, and Nesime Tatbul.
\newblock {Neo: A Learned Query Optimizer}.
\newblock In {\em Proceedings of the ACM International Conference on Management
  of Data, SIGMOD}, 2019.

\bibitem{MZK20}
Ryan Marcus, Emily Zhang, and Tim Kraska.
\newblock {CDFShop: Exploring and Optimizing Learned Index Structures}.
\newblock In {\em {SIGMOD (Demo)}}, 2020.

\bibitem{MWAY20}
{Multi-Way Sort Merge Join}.
\newblock
  \url{https://systems.ethz.ch/research/data-processing-on-modern-hardware/projects/parallel-and-distributed-joins.html}.

\bibitem{NDA+20}
Vikram Nathan, Jialin Ding, Mohammad Alizadeh, and Tim Kraska.
\newblock {Learning Multi-Dimensional Indexes}.
\newblock In {\em Proceedings of the ACM International Conference on Management
  of Data, SIGMOD}, 2020.

\bibitem{PRK+20}
Varun Pandey, Alexander van Renen, Andreas Kipf, Ibrahim Sabek, Jialin Ding,
  and Alfons Kemper.
\newblock {The Case for Learned Spatial Indexes}.
\newblock In {\em Proceedings of the AIDB Workshop @VLDB}, 2020.

\bibitem{PR14_2}
Orestis Polychroniou and Kenneth~A. Ross.
\newblock {Vectorized Bloom Filters for Advanced SIMD Processors}.
\newblock In {\em DaMoN}, 2014.

\bibitem{QLJ+20}
Jianzhong Qi, Guanli Liu, Christian~S. Jensen, and Lars Kulik.
\newblock {Effectively Learning Spatial Indices}.
\newblock In {\em Proceedings of the International Conference on Very Large
  Data Bases, VLDB}, 2020.

\bibitem{RS20}
{RadixSpline}.
\newblock \url{https://github.com/learnedsystems/RadixSpline}.

\bibitem{RMI20}
{Recursive Model Indexes}.
\newblock \url{https://github.com/learnedsystems/RMI}.

\bibitem{RAD15}
Stefan Richter, Victor Alvarez, and Jens Dittrich.
\newblock {A Seven-Dimensional Analysis of Hashing Methods and Its Implications
  on Query Processing}.
\newblock In {\em Proceedings of the International Conference on Very Large
  Data Bases, VLDB}, 2015.

\bibitem{SVH+21}
Ibrahim Sabek, Kapil Vaidya, Dominik Horn, Andreas Kipf, and Tim Kraska.
\newblock {When Are Learned Models Better Than Hash Functions?}
\newblock In {\em Proceedings of the AIDB Workshop @VLDB}, 2021.

\bibitem{SCD16}
Stefan Schuh, Xiao Chen, and Jens Dittrich.
\newblock {An Experimental Comparison of Thirteen Relational Equi-Joins in Main
  Memory}.
\newblock In {\em Proceedings of the ACM International Conference on Management
  of Data, SIGMOD}, 2016.

\bibitem{STC+20}
Anil Shanbhag, Nesime Tatbul, David Cohen, and Samuel Madden.
\newblock {Large-Scale in-Memory Analytics on Intel Optane DC Persistent
  Memory}.
\newblock In {\em Proceedings of the International Workshop on Data Management
  on New Hardware, DaMoN}, 2020.

\bibitem{SOSD20}
{SOSD}.
\newblock \url{https://github.com/learnedsystems/SOSD}.

\bibitem{SOSDLeaderboard20}
{SOSD Leaderboard}.
\newblock \url{https://learnedsystems.github.io/SOSDLeaderboard/leaderboard/}.

\bibitem{SLM+01}
Michael Stillger, Guy~M. Lohman, Volker Markl, and Mokhtar Kandil.
\newblock {LEO - DB2’s LEarning Optimizer}.
\newblock In {\em Proceedings of the International Conference on Very Large
  Data Bases, VLDB}, pages 19 -- 28, 2001.

\bibitem{TAB+13}
Jens Teubner, Gustavo Alonso, Cagri Balkesen, and M.~Tamer Ozsu.
\newblock {Main-Memory Hash Joins on Multi-Core CPUs: Tuning to the Underlying
  Hardware}.
\newblock In {\em Proceedings of the International Conference on Data
  Engineering, ICDE}, 2013.

\bibitem{VZZ+00}
Gary Valentin, Michael Zuliani, Daniel~C. Zilio, Guy~M. Lohman, and Alan
  Skelley.
\newblock {DB2 Advisor: An Optimizer Smart Enough to Recommend its Own
  Indexes}.
\newblock In {\em Proceedings of the International Conference on Data
  Engineering, ICDE}, pages 101 -- 110, 2000.

\bibitem{YCW+20}
Zongheng Yang, Badrish Chandramouli, Chi Wang, Johannes Gehrke, Yinan Li,
  Umar~Farooq Minhas, Per \r{A}ke Larson, Donald Kossmann, and Rajeev Acharya.
\newblock {Qd-Tree: Learning Data Layouts for Big Data Analytics}.
\newblock In {\em Proceedings of the ACM International Conference on Management
  of Data, SIGMOD}, 2020.

\bibitem{ZR03}
Jingren Zhou and Kenneth~A. Ross.
\newblock {Buffering Accesses to Memory-Resident Index Structures}.
\newblock In {\em Proceedings of the International Conference on Very Large
  Data Bases, VLDB}, 2003.

\end{thebibliography}

\end{document}